\newtheorem{prop}{Proposition}
\newtheorem{theorem}{Theorem}
\newtheorem{assumptions}{Assumption}
\newtheorem{remark}{Remark}
\newcommand{\wtilde}{\widetilde}
\newcommand{\drift}{\mathscr{D}}
\newcommand{\pr}{\mathrm{proj}_{\mathcal{M}}}
\newcommand{\PP}{\mathbb{P}}
\renewcommand{\P}[1]{\PP\left( #1 \right)}
\newcommand{\EE}{\mathbb{E}}
\newcommand{\E}[1]{\EE\left[ #1  \right]}
\newcommand{\RR}{\mathbb{R}}
\newcommand{\A}{\mathcal{A}}
\newcommand{\m}{\mathcal{M}}
\newcommand{\OO}{\mathcal{O}}
\renewcommand{\epsilon}{\varepsilon}
\renewcommand\phi{\varphi}
\newcommand{\Normal}{\ensuremath{\operatorname{N}}}
\newcommand{\Exponential}{\ensuremath{\operatorname{Exp}}}
\newcommand{\trace}{\ensuremath{\operatorname{Tr}}}
\newcommand{\indic}{\textbf{1}}
\newcommand{\floor}[1]{\lfloor #1 \rfloor}
\newcommand{\Gen}{\mathcal{L}}
\newcommand{\GG}{\mathcal{G}}
\newcommand{\esjd}{\textrm{ESJD}}
\newcommand{\argmax}{\textrm{argmax}}
\newcommand{\dirichlet}{\mathcal{D}}
\newcommand{\skorokhod}{\mathbf{D}}
\newcommand{\bra}[1]{\langle #1 \rangle}
\newcommand{\BK}[1]{ {\left( #1 \right)} }
\newcommand{\sqBK}[1]{ {\left[ #1 \right]} }
\newcommand{\curBK}[1]{ {\left\{ #1 \right\}} }
\newcommand{\norm}[1]{\left\Vert #1 \right\Vert}
\newcommand{\abs}[1]{\left| #1 \right|}
\begin{document}

\begin{frontmatter}

\title{Asymptotic Analysis of the Random-Walk Metropolis Algorithm on Ridged Densities}
\runtitle{MCMC}

\begin{aug}
  \author{ \fnms{Alexandros} \snm{Beskos}\thanksref{m1} \ead[label=e1]{a.beskos@ucl.ac.uk}},
  \author{\fnms{Gareth} \snm{Roberts}\thanksref{m2}
  \ead[label=e2]{gareth.o.roberts@warwick.ac.uk}},
  \author{\fnms{Alexandre} \snm{Thiery}\thanksref{m3}%
  \ead[label=e3]{a.h.thiery@nus.edu.sg}}
   \and
   \author{ \fnms{Natesh} \snm{Pillai}\thanksref{m4} \ead[label=e4]{pillai@fas.harvard.edu}},
  \runauthor{Beskos et al.}
  \affiliation{University College London\thanksmark{m1}, University of Warwick\thanksmark{m2}, \\
National University of Singapore\thanksmark{m3}, Harvard University\thanksmark{m4}}

  \address{Gower Street, London,\\ 
          \printead{e1}}

  \address{University of Warwick, Coventry,\\
          \printead{e2}}

 \address{National University of Singapore, Singapore,\\
          \printead{e3}}
          
 \address{Harvard University, Boston,\\
          \printead{e4}}

\end{aug}

\begin{abstract}
In this paper we study the asymptotic behavior of the Random-Walk Metropolis algorithm on probability densities with two different `scales', where most of the probability mass is distributed along certain key directions with the `orthogonal' directions containing relatively less mass.
Such class of probability measures arise in various applied contexts including Bayesian  inverse problems where the posterior measure concentrates on a sub-manifold when the noise variance goes to zero. 
When the target measure concentrates on a linear sub-manifold, we derive analytically a diffusion limit for the Random-Walk Metropolis Markov chain as the scale parameter goes to zero. In contrast to the existing works on scaling limits, our limiting Stochastic Differential Equation does \emph{not} in general have a constant diffusion
coefficient. 
Our results show that in some cases, the usual practice of adapting the step-size to control the acceptance probability might be sub-optimal as the optimal acceptance probability is zero (in the limit).
\end{abstract}

\begin{keyword}[class=AMS]
\kwd[Primary ]{65C05}
\kwd{65C40}
\kwd[; secondary ]{65C30}
\end{keyword}

\begin{keyword}
\kwd{Manifold}
\kwd{Random-Walk Metropolis}
\kwd{generator}
\kwd{diffusion limit}
\end{keyword}

\end{frontmatter}


\section{Introduction}
Optimal scaling of Metropolis-Hastings (MH) algorithms in high dimensions and analysis of their asymptotic behavior has been a fruitful area of research in the last three decades.
Initiated by \cite{robe:97}, a long list of papers has been devoted to deriving the optimal scale of various local-move Markov Chain Monte-Carlo (MCMC) algorithms \cite{roberts2001optimal,robertsminimising,pasarica2010adaptively,breyer2004optimal,BedardRosenthal:2008}. A main result in most of these works is  identifying the proper scale for the proposal distribution at which the average acceptance probability is $\mathcal{O}(1)$ as a function of the dimension, and obtaining an associated diffusion limit. The limiting Stochastic Differential Equation (SDE) in all of the earlier works has a constant diffusion coefficient which uniquely characterizes its `speed measure', with the latter being controlled by the step-size of the local-proposal and the average acceptance probability. 
For the Random-Walk Metropolis algorithm (RWM), maximizing the speed of the limiting diffusion leads to an average acceptance probability of 0.234. 
Thus the theoretical asymptotic analysis provides a simple optimality criterion that can be used as a guideline for practical implementation of these algorithms.
Most of the papers cited above assume that the target measure is a product of one-dimensional densities - as this facilitates explicit calculation of the diffusion limit whose dynamics are independent of the state of other components. However,
the product form assumption is not essential, and there are now a number of recent generalisations to target measures that are not of a product form \cite{bedard2007weak,beskos2009optimal,mattingly2012diffusion,PST12}.

We adopt a different point of view in this paper, and study the behavior of the RWM algorithm on a class of target distributions that have `ridges' in certain directions. Such target distributions arise in a number of examples in applied contexts. For instance, in Bayesian inverse problems, when the variance parameter in the likelihood model is small, the posterior distribution (as parametrized by the noise variance) will concentrate on a submanifold or along a hyperplane as the noise variance goes to zero. 
In contrast to the works cited above on scaling, the parameter space in our problems need not be high dimensional. The key issue pertaining to the mixing of MCMC algorithms in these contexts is that there is a natural separation of scale in the target distribution: a larger scale along certain key directions of interest where most of the probability mass is distributed and a smaller scale in the `orthogonal directions' where there is relatively little mass. Thus in the context of previous works cited above, the role of the dimension is played by this scale. 
The contributions of this work are summarised as follows:
\begin{itemize}
\item[i.]
Motivated by inverse problems, we first describe a natural class of target distributions which have two scales (of magnitudes $\mathcal{O}(1)$ and $\epsilon \ll 1$). Next, we study the RWM with step-sizes of the same scale for all co-ordinates. 
We focus on the case where the target distribution concentrates on a linear hyperplane as $\epsilon \rightarrow 0$. 
Adapting the RWM steps in the direction of smaller scale $\epsilon$, we 
derive a diffusion limit for the RWM for the co-ordinates with $\mathcal{O}(1)$ scale. 
In contrast with all previous results on diffusion limits of MCMC algorithms, our limiting SDE will in general have non-additive noise, \textit{i.e.}, the diffusion term will not have constant coefficients. We also look at the case when the step-size is allowed to vary according to the local curvature and 
obtain a corresponding diffusion limit. We show that diffusion limits can be useful in certain situations for providing optimality criteria in these problems. 
We mention that in the case of a linear manifold, the user can easily adapt the different RWM step-sizes along the various directions to obtain a very effective algorithm of cost $\mathcal{O}(1)$.
The real practical interest lies with non-linear structures.
We have focused on a linear manifold to facilitate the development of theory. We also provide a conjecture for the diffusion limit in the practical case of a family of non-linear manifolds, but analytical derivations will require considerable future work. 
\item[ii.] When the dimension of the manifold where the target measure concentrates lags that of the entire state space only by one, 
we find the usual practice of adapting the step-size to control the acceptance probability is \emph{not}-optimal. In particular, our diffusion limits imply that a more efficient chain can be obtained by keeping a large step-size and allowing the acceptance probability to drop to zero. In this case we show that as $\epsilon \rightarrow 0$, the optimally-scaled RWM algorithm does \emph{not} converge weakly to a diffusion; rather it converges to a continuous-time Markov jump process.
\item[iii.] In contrast, when  the dimension of the manifold where the target measure concentrates lags that of the entire state space by at least $3$, we find that the diffusion regime is optimal. Intuitively, the cost of attempting large moves
(as measured by small acceptance probability) is just too large in this case. In the critical case when  the dimension of the manifold where the target measure concentrates lags that of the entire state space by exactly $2$, the cost of low acceptance probability is of the same order as the benefit of larger moves, so that
the optimal scaling strategy can vary depending on the specific form of the target density.
\item[iv.] As mentioned before, the most important practical benefit from deriving a diffusion limit is that it leads to an automated choice of the step-size: choose the step-size that maximizes the diffusion coefficient of the limiting SDE \cite{robe:97}. Doing so maximizes the speed measure of the limiting diffusion which in turn translates to maximizing the one-step expected jumping distance of the Markov chain at stationarity, or minimizing the first order auto-correlation for the chain. Since our limiting diffusions generally do not have a constant diffusion coefficient, this approach of maximizing the diffusion coefficient for choosing the step size is not valid any more. In general, there is no optimal value for the jump size of the RWM algorithm when applied to the target measures analyzed in this paper. We prove nevertheless that, if the jump size is allowed to be position dependent, optimality results can then be recovered. When the dimension of the coordinates with scale $\epsilon$ is large, our results yield a `local' $0.234$ rule that generalize the standard global $0.234$ of \cite{roberts1997weak}, \textit{i.e.}, it is asymptotically optimal to tune the local jump size so that the local acceptance rate equals $0.234$.
\item [v.] In general, practitioners should be aware that, when tuning the standard RWM algorithm, the general strategy consisting in choosing the largest jump size that yields an acceptance rate bounded away from a predetermined threshold (\textit{e.g.}, acceptance rate larger than $25\%$) is sometimes sub-optimal; indeed, our analysis shows that for `multiscale densities', optimal strategies can yield arbitrarily small acceptance rates.

\item [vi.] Technically, our proofs for the diffusion limits are different from the usual optimal scaling literature, as the scale parameter in our target measure need not be related to the dimension. Our main argument relies on the fact that by suitably rescaling the co-ordinates corresponding to the scale $\epsilon$,
the RWM algorithm mixes in these directions at a much faster rate than its $\mathcal{O}(1)$ counterpart. This fast mixing on the Markov chain in the smaller scale will lead to an `asymptotic decoupling' of the two scales which then will give the diffusion limit on the larger scales. The proof technique developed here could be applicable to other contexts and is thus of independent interest. 
\end{itemize}
%
%
%
The class of target measures we analyze in this paper should be considered as surrogates for distributions arising from real applied problems where components have  various  length scales that differ by orders of magnitude. In Statistics, there are many such examples - we provide here a few, including our motivating scenario from Bayesian inverse problems. 
\begin{itemize}
\item
In inverse problems, situations where the posterior distribution concentrates in the neighbourhood of a non-linear manifold abound, see for example \cite{radde2014convergence,cotter2009bayesian,knapik2011bayesian}. A typical situation involves the posterior distribution of a high-dimensional vector $x \in \RR^n$ observed through a low-dimensional, possibly noisy, non-linear measurement function $\Phi:\RR^n \to \RR^d$ with $d \ll n$; the data collected is distributed as $y \sim \Phi(x) + \textrm{(noise)}$. As the intensity $\sigma > 0$ of the noise decreases to zero, the posterior distribution $\pi(x \mid y)$ typically  concentrates  in a neighbourhood of thickness $\mathcal{\sigma}$ around  the manifold $\mathcal{M} \equiv \{x \in \RR^n :  \Phi(x) = y\}$.
\item
Suppose that we have a sequence of posterior distributions $\{ \pi _n (\theta )\}$, indexed by the data size $n$, where we can write
$\theta = (\theta _1, \theta _2)$ with $\theta _1$ representing the components of the posterior which are {\em identifiable} while $\theta _2$ remains unidentifiable so that its standard deviation remains $\mathcal{O}(1)$ for increasing $n$. For example, consider the posterior $\pi_n( \sigma,\mu)$ associated to $n \geq 1$ observations $y_i = X_{k/n}$ of the diffusion process $dX = \mu \, dt + \sigma \, dW$ on $[0,1]$; as $n \to \infty$, only the volatility coefficient is identifiable.
In the so-called \emph{regular} case, we might expect the marginal posterior of $\theta _1$ to contract at rate $n$ for instance. Therefore it is very natural to see this scale divergence as $n\to \infty $.
\item
Consider the context of maximum likelihood estimation (MLE) for non-regular likelihood functions,
For such problems, {\em super-efficiency} of MLEs is a well-known phenomenon, see for example \cite{Vaart}, in which the standard deviation of the
MLE shrinks to $0$ faster than $n^{-1/2}$ for data size $n$.
It may be that {\em super-efficiency} applies only to some of the model parameters, leading to the kind of
heterogeneously scaled target distributions we consider here. For concreteness, we consider the specific problem where the data is assumed to be drawn i.i.d from the model $X \sim \Exponential(\lambda )$ conditioned on $X \le \theta $. It is well-known that for this kind of example the posterior for $\lambda $ contracts at the regular rate of $n^{-1/2}$ while that for $\theta $ contracts at the much more rapid rate of $n^{-1}$. Thus again we get scale divergence as 
$n\to \infty $.
\end{itemize}
%
%
%
The paper is structured as follows. 
In Section \ref{sec:model} we describe the RWM algorithm for a density concentrating 
on a hyperplane with rate $\epsilon>0$. In Section \ref{sec:limit}  we state the regulatory conditions and
write the diffusion limits as $\epsilon\rightarrow 0$.
In Section \ref{sec:proof}, we prove the stated results.
In Section \ref{sec.vanishing.acceptance} we prove a limiting 
result involving a Markov jump-process, when the step-sizes are order $\mathcal{O}(1)$, 
so are not adapted to the size of the smallest co-ordinate. 
In Section \ref{sec:man} we describe in some detail a conjecture for generalizing our diffusion 
limit results in the context of non-linear manifolds.
Throughout, 
we provide comments about the implications of the theoretical results.
We finish with some conclusions and description of future work in Section \ref{sec:end}.
\section{Random-Walk Metropolis on Affine Manifold}
\label{sec:model}
As explained above, we prove analytical rigorous results in the case when the manifold is flat, i.e.\@ 
an affine subspace of the general state space. 
We   discuss later on in the paper extensions to more general manifold structures.
We model the affine scenario as follows. We
consider the target distribution $\pi_{\epsilon}:\RR^{n_x}\times \RR^{n_y}\mapsto\RR$, 
for integers $n_x,n_y\geq 1$ and $n=n_x+n_y$, 
with density with respect to the $n$-dimensional Lebesgue measure
\begin{align} \label{ridge.eq.target.density}
\pi_{\epsilon}(x,y) = \pi_X(x)\, \pi_{Y|X}(y|x) =  \tfrac{1}{\epsilon^{n_y}}\,
e^{A(x)} \, e^{B(x,y/\epsilon)} \ ,
\end{align}
for a `small' scalar $\epsilon>0$. The $x$-marginal has density $\pi_X(x)=e^{A(x)}$ independently of $\epsilon$. 
This is a scaled version of the probability 
distribution $\pi(x,y) \equiv \pi_1(x,y)= e^{A(x)}e^{B(x,y)}$. 
As $\epsilon \to 0$, the support of the sequence of distributions $\pi_{\epsilon}$ 
concentrates on the linear subspace $\m$:
\begin{equation}
\label{eq:man}
\m = \big\{ \, (x,y) \in \RR^{n_x} \times \RR^{n_y} \, : \, y = 0 \, \big\} 
\; \subset \; \RR^{n_x+n_y}\ , 
\end{equation}
of dimension $n_x$. 
Integer $n_x$ can sometimes be thought of as the dimension of the non-identifiability and the integer $n_y$ as the part of dimension
that is fully specified. For instance, 
in a small-noise or increasing-data context, one can consistently estimate $n_y$-parameters out of $n$.
Parameter $\epsilon$ can be thought of as the thickness
of the support of $\pi_\epsilon$ at a neighbourhood of $x\in \m$.

To obtain samples from $\pi_{\epsilon}$
we consider the RWM algorithm proposing moves
\begin{align}
\left(\begin{array}{c} 
X_{\epsilon}' \\ Y'_{\epsilon}  
\end{array}\right) = 
\left(\begin{array}{c} 
X_{\epsilon} \\ Y_{\epsilon}  
\end{array}\right) + 
\ell\,\, h(\epsilon)\,\left(\begin{array}{c} 
Z_x \\ Z_y  
\end{array}\right) 
 \label{ridge.eq:proposal}
\end{align}
for a scaling factor $h(\epsilon)$, tuning parameter $\ell>0$  
and noise $(Z_x,Z_y)\sim  \Normal(0,I_{n_x+n_y})$. 
The  factor $h(\epsilon)$ determines the scales of the jumps of the RWM algorithm and the tuning parameter $\ell$ allows to control the acceptance rate of the algorithm.
When the context is clear, we write simply $(X,Y)$ instead of $(X_{\epsilon}, Y_{\epsilon})$.
For the purpose of analysis, we introduce the rescaled coordinate $U_{\epsilon}$ and the associate rescaled proposal $U'_{\epsilon}$,
\begin{equation*}
U_{\epsilon} =  Y_{\epsilon} / \epsilon\ 
\quad \textrm{and} \quad
U'_{\epsilon} = Y_{\epsilon}'/\epsilon = U_{\epsilon} + \ell\, \tfrac{h(\epsilon)}{\epsilon}\,Z_y\ .
\end{equation*}
If $(X_{\epsilon},Y_{\epsilon}) \sim \pi_{\epsilon}$ then $(X_{\epsilon},U_{\epsilon})\sim \pi$.
To finish the description of the MCMC algorithm, we need to choose  
an accept-reject function $F$. To obtain detailed balance w.r.t.\@ $\pi_{\epsilon}$, one can choose any $(0,1]$-valued 
function $F$ satisfying the reversibility condition 
\begin{align} \label{ridge.eq.reversibility}
e^r F(-r)  = F(r)\ .
\end{align}
The usual Metropolis-Hastings accept/reject correction corresponds  to the choice    $F(r) = F_{\text{MH}}(r) = \min(1,e^r)$.
The move $(X,Y) \mapsto (X',Y')$, or equivalently $(X,U) \mapsto (X',U')$, 
is then accepted with probability
\begin{equation*}
F \circ \log \BK{ \frac{\pi_{\epsilon}(X',Y')}{\pi_{\epsilon}(X,Y)} }\ .
\end{equation*} 
We work with a general accept-reject function $F(r)$ as enforcing some 
differentiability condition removes several inessential technicalities 
from our proofs. 
Indeed, we assume in the remainder of this article the following regularity assumptions on $F$. 
\begin{assumptions}
\label{ass:F}
The accept/reject function $F$ is differentiable. $F$ and $F'$ are globally Lipschitz and bounded over the real line.
\end{assumptions}

\noindent
Assumption \ref{ass:F} could   be relaxed. For example, one could deal with the standard Metropolis-Hastings ratio; this would involve dealing with the discontinuity of the derivative at 0, which is only a technical distraction that does not bring new insights into the behaviour of the algorithm and make the proofs much more opaque.
Notice that any function $F$ satisfying 
the reversibility condition \eqref{ridge.eq.reversibility} is dominated by the Metropolis-Hasting 
function $F_{\text{MH}}$ in the sense that the inequality $F(r) \leq F_{\text{MH}}(r)$ holds 
for any $r \in \RR$. 
For the target density \eqref{ridge.eq.target.density} the acceptance probability can be written analytically as $a(X,U, h(\epsilon) \, Z_x, h(\epsilon) \, \epsilon^{-1} \, Z_y)$ where the function $a(\cdot, \cdot, \cdot, \cdot)$ reads
\begin{align}
\label{ridge.eq:accept}
a(x,u, \delta_x, \delta_u) = F\Big( A(x + \ell \, \delta_x ) - A(x) + B(x + \ell \, \delta_x, u + \ell \, \delta_u) - B(x,u) \Big)\ .
\end{align}
The above proposal and acceptance probability give rise to the 
MCMC trajectory $\{(X_{\epsilon,k},Y_{\epsilon,k})\}_{k\ge 0}$.
\subsection{Notation}
For $(\epsilon,x,u) \in \RR^+ \times \RR^{n_x} \times \RR^{n_y}$ we write  $\EE_{x,u}\,[\,\cdot \,]$ (or sometimes $\EE_{\epsilon, x,u}\,[\,\cdot \,]$) to denote the conditional expectation $\EE\,[\; \cdot \mid (X_{\epsilon,0},U_{\epsilon,0}) = (x,u)\,]$.  
For functions $e = e(x,u)$, we write $\EE_{\pi}\,[\,e(x,u)\,]$ 
to indicate that the expectation is considered under the standardised
law $(x,u)\sim\pi$.
We denote by $\|\cdot \|_p$ the $L_p$-norm, $p>0$.
We use the expression ``$e=e(\epsilon,x,u) = o_{L_1(\pi)}(1)$" to indicate that for function $e(\epsilon, x,u)$ we have $\lim_{\epsilon \to 0} 
\EE_{\pi}\,|\,e(x,u,\epsilon)\,|=0$.
We use the symbol $(\,\lesssim\,)$ to indicate inequalities that hold under multiplication 
with a constant which does not depend on critical parameters implied by the context.

%
%

\section{Diffusion Limit}
\label{sec:limit}
We study in this section the behaviour of RWM,
as $\epsilon\rightarrow 0$, for a scaling factor of the form
\begin{align} \label{eq.standard.scaling}
h(\epsilon) =  \epsilon \ .
\end{align}
Note that it is straightforward to prove that any other scaling factor 
$h(\epsilon)$ such that $h(\epsilon) / \epsilon \to \infty$ 
would lead to an algorithm with an acceptance rate that decreases 
to zero as $\epsilon \to 0$. This corresponds to an algorithm that 
proposes too large moves and such a context  is investigated in Section \ref{sec.vanishing.acceptance}. A scaling factor of the form 
\eqref{eq.standard.scaling} thus corresponds to the largest choice of 
jump-sizes that leads to an algorithm who acceptance rate does not 
degenerate to zero in the limit $\epsilon \to 0$.
To state one of our main results, we introduce the quantity 
\begin{align} \label{ridge.eq.mean.acceptance}
a_0(x, \ell) 
= \int_{\RR^{n_y}} \EE_{x,u} \Big[ F\Big( B(x,u + \ell\, \,Z_y )-B(x,u)\Big) \Big] \, e^{B(x,u)} \, du  \  ,
\end{align}
which corresponds to the  limiting average acceptance probability, as $\epsilon \to 0$, 
of the RWM algorithm when assuming stationarity for $U|X=x$, conditionally on a fixed position for the $x$-coordinate. 
One can easily verify (bounded convergence theorem) that
\begin{align}
a_0(x, \ell) = \lim_{\epsilon \to 0} \, \EE\, \big[\, a(X,U,\epsilon \, Z_x, Z_y) \mid  X=x\,\big]\ ,
\end{align}
assuming $(X,U)\sim \pi$.
We prove a diffusion limit for the trajectory of the $x$-coordinate, after considering 
a proper continuous time-scale for
its discrete-time trajectory $\{X_{\epsilon,k}\}_{k \geq 0}$, where $k$ denotes the number of MCMC iterations.  
To this end, we define
\begin{equation*}
c(\epsilon) = \epsilon^{-2}.
\end{equation*}
Our main result states that the sequence of accelerated, continuous-time, c\`adl\`ag processes
\begin{align}
\label{ridge.eq:xepsilon}
\wtilde{X}_{\epsilon,t} := X_{\epsilon, \floor{ t \cdot c(\epsilon) } }
\end{align}
converges weakly,  as $\epsilon \to 0$, to a non-trivial diffusion process; 
thus $c(\epsilon)$ corresponds to the `diffusive time-scale'.
\begin{remark}
For a given   positive density function $\pi: \RR^n \to (0,\infty)$, for $n\ge 1$, and a    scalar volatility function $\sigma: \RR^n \to (0,\infty)$ 
we introduce the function $\drift_{\pi, \sigma^2}: \RR^n \to \RR^n$ 
\begin{align}
\drift_{\pi, \sigma^2}: \quad x \mapsto \tfrac12 \, \nabla \sigma^2(x) + \tfrac12 \,\sigma^2(x)\,\nabla \log \pi(x) \ .
\end{align}
Consider the stochastic differential equation (SDE):
\begin{align} \label{eq.SDE.langevin}
d \overline{X} = \drift_{\pi, \sigma^2}(\overline{X}) \, dt  + \sigma(\overline{X}) \, dW
\end{align}
where $W_t$ denotes $n$-dimensional Wiener process.
Under standard regularity assumptions
the SDE \eqref{eq.SDE.langevin}
has a unique global solution and is reversible with respect to the law $\pi(dx)$. The case  $\sigma \equiv \textrm{const.}$ corresponds to the   Langevin diffusion  $d\overline{X} = \frac{\sigma^2}{2} \, \nabla \log \pi (\overline{X})\,  dt + \sigma \, dW$. 

\end{remark}
To obtain our scaling limit, we assume the following regularity conditions for functions $A$ and $B$ involved in the specification of the density $\pi_{\epsilon}$. 
\begin{assumptions} [Regularity Conditions on $\pi_{\epsilon}$]
 \label{ridge.assump.pi}
Functions $A:\RR^{n_x} \to \RR$ and $B: \RR^{n_x} \times \RR^{n_y} \to \RR$ are
differentiable and their derivatives 
are globally Lipschitz.
Moreover, we assume that the distribution $\pi \equiv \pi_1$ possesses finite second moments in the sense that
\begin{align} \label{ridge.eq.assumption.moment}
\EE_{\pi}\,[\,\|x\|^2 +\|u\|^2\,] < \infty\ .
\end{align}
\end{assumptions}
\noindent
Assumption \ref{ridge.assump.pi} is repeatedly used to control the error terms associated to the use of second order Taylor expansions; the second moment bound allows the use of the Cauchy-Schwarz inequality. Note that these assumptions could be relaxed in several directions at the cost of increasing complexity in the proofs; for example, one could assume only a polynomial growth on the derivatives
and higher moments for $\pi$.
 We also need  
the following assumption to control the behaviour of the diffusion limit identified below.

\begin{assumptions}
\label{ass:exist}
The function $x \mapsto \frac{1}{2}\,\nabla\BK{a_{0}(x,\ell)\log \pi(x)}$ is bounded on $\mathbb{R}^{n_x}$ and there is an exponent $\mu\in(0,1]$ such that 
\begin{align*}
\abs{ a_0(x,\ell)-a_0(x',\ell)} + 
\abs{ \nabla\BK{a_{0}(x,\ell) \times \log \pi(x)} - \nabla\BK{a_{0}(x',\ell) \times \log \pi(x')} }
\lesssim
\abs{x-x'}^{\mu}.
	\end{align*}
%
\end{assumptions}
\noindent
The main theorem of this section is the following. 
The proof is given in Section \ref{ridge.proof.main}.

%
%
\begin{theorem}
\label{ridge.th:main}
Let $T > 0$ be a fixed  time horizon. Assume 
that the RWM algorithm  is started in stationarity, 
$(X_{\epsilon,0}, Y_{\epsilon,0}) \sim \pi_{\epsilon}$, and
Assumptions~\ref{ass:F}-\ref{ass:exist} hold. Then,  as $\epsilon \to 0$ the sequence of  processes 
$\{ \wtilde{X}_{\epsilon,t} \}_{t \in [0,T]}$ defined via \eqref{ridge.eq:proposal}, \eqref{ridge.eq:xepsilon}
converges weakly in the 
Skorokhod space $\skorokhod([0,T], \RR^{n_x})$ to the diffusion process $\{\overline{X}_t\}_{t \in [0,T]}$
specified as the solution of the stochastic differential equation:
\begin{align} \label{ridge.eq.limit.diff}
d\overline{X} = \drift_{\pi_X, \sigma^2}(\overline{X}) \, dt  + \sigma(\overline{X})\,dW
\end{align}
for volatility function  $\sigma^2(x) \equiv \ell^2 \, a_0(x,\ell)$ and  initial position $\overline{X}_0 \sim \pi_X(x)=e^{A(x)}$.
\end{theorem}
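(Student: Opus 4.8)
The plan is to work with the joint chain $\{(X_{\epsilon,k},U_{\epsilon,k})\}$, which is Markov (the $x$-marginal alone is not), and to prove convergence of its time-accelerated generator to that of \eqref{ridge.eq.limit.diff} through a martingale-problem argument that exploits the fast--slow structure. In the coordinates $(x,u)$ the proposal \eqref{ridge.eq:proposal} moves $x$ by $O(\epsilon)$ and $u$ by $\ell Z_y = O(1)$, while time is sped up by $c(\epsilon)=\epsilon^{-2}$. Hence, per unit of diffusive time, $X$ makes $O(1)$ net progress whereas $U$ performs $\Theta(\epsilon^{-2})$ steps and should equilibrate to its conditional stationary law $e^{B(x,u)}\,du$ with $x$ essentially frozen. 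This \emph{asymptotic decoupling} of the two scales is what the whole argument rests on.

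First I would expand the one-step generator. For $f=f(x)\in C^2_b$, writing the increment as $f(X_1)-f(x)=\indic_{\mathrm{acc}}\,\BK{f(x+\ell\epsilon Z_x)-f(x)}$ and Taylor-expanding both this increment and the acceptance probability \eqref{ridge.eq:accept} --- using $F\in C^1$ with $F'$ Lipschitz (Assumption~\ref{ass:F}) and the Lipschitz derivatives of $A,B$ (Assumption~\ref{ridge.assump.pi}) --- one finds, after multiplying by $c(\epsilon)$ and writing $r_0:=B(x,u+\ell Z_y)-B(x,u)$,
\[
c(\epsilon)\,\EE_{x,u}\sqBK{f(X_1)-f(x)} \;\longrightarrow\; \tfrac12\,\ell^2\,\EE_{Z_y}\sqBK{F(r_0)}\,\Delta f(x) + \ell^2\,\EE_{Z_y}\sqBK{F'(r_0)\,\BK{\nabla A(x)+\nabla_x B(x,u+\ell Z_y)}}\cdot\nabla f(x).
\]
The $O(\epsilon)$ contribution vanishes in expectation by oddness of $Z_x$, so there is no $\epsilon^{-1}$ term and no corrector is forced at that order. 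All Taylor remainders are controlled as $o_{L_1(\pi)}(1)$ via the second-moment bound \eqref{ridge.eq.assumption.moment} and Cauchy--Schwarz. The limiting operator still depends on $u$.

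The cleanest step is identifying the $u$-average of this operator with the SDE generator. Averaging over $u\sim e^{B(x,u)}$ turns the first term into $\tfrac12\ell^2 a_0(x,\ell)\,\Delta f=\tfrac12\sigma^2(x)\Delta f$ directly from \eqref{ridge.eq.mean.acceptance}. For the drift I would use reversibility \eqref{ridge.eq.reversibility}: writing $\varphi$ for the $\Normal(0,I_{n_y})$ density, the involution $(u,z)\mapsto(u+\ell z,-z)$ (Jacobian $1$) maps $r_0\mapsto-r_0$, and $e^rF(-r)=F(r)$ together with its differentiated form $e^rF'(-r)=F(r)-F'(r)$ shows both that $\int F'(r_0)\,\varphi(z)\,e^{B(x,u)}\,dz\,du=\tfrac12 a_0(x,\ell)$ and that the averaged drift collapses to $\tfrac{\ell^2}{2}\BK{\nabla a_0+a_0\nabla A}=\tfrac12\nabla\sigma^2+\tfrac12\sigma^2\nabla\log\pi_X=\drift_{\pi_X,\sigma^2}(x)$, precisely the coefficient in \eqref{ridge.eq.limit.diff}. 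The same averaging applied to $\EE_{x,u}\sqBK{(f(X_1)-f(x))^2}$ produces the predictable quadratic variation $\int_0^t\sigma^2(\wtilde X_{\epsilon,s})\,\norm{\nabla f}^2\,ds$.

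The hard part will be justifying the averaging itself, i.e.\ replacing the time integral of the $u$-dependent operator at $(X_{\floor{sc(\epsilon)}},U_{\floor{sc(\epsilon)}})$ by $\int_0^t\overline{\Gen}f(\wtilde X_{\epsilon,s})\,ds$ with $\overline{\Gen}$ the generator of \eqref{ridge.eq.limit.diff}. Since $g:=L_2f-\overline{\Gen}f$ has zero conditional mean given $x$, this is an ergodic-averaging statement for the fast chain; I would bound $\EE_\pi\big|c(\epsilon)^{-1}\sum_{j\le tc(\epsilon)}g(X_j,U_j)\big|$ by an $L^2$ decorrelation estimate, using that $P^m g$ decays on the $O(1)$ step scale because the frozen-$x$ dynamics of $U$ (an RWM reversible w.r.t.\ $e^{B(x,u)}$) mix fast; an equivalent route is the perturbed test function $f+\epsilon^2\chi$ with $\chi$ solving the fast Poisson equation $L_0^U\chi=\overline{\Gen}f-L_2f$. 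Either way the delicate and genuinely technical point is securing this fast-mixing/decorrelation control uniformly enough in $x$. It then remains to prove tightness of $\{\wtilde X_{\epsilon,\cdot}\}$ in $\skorokhod([0,T],\RR^{n_x})$ from $\EE\,\norm{\wtilde X_{\epsilon,t}-\wtilde X_{\epsilon,s}}^2\lesssim\abs{t-s}$ (again using \eqref{ridge.eq.assumption.moment} and boundedness of $F$), to identify every limit point as a solution of the martingale problem for $\overline{\Gen}$, and to invoke Assumption~\ref{ass:exist} for well-posedness of \eqref{ridge.eq.limit.diff}, so that the limit is the unique diffusion with $\overline{X}_0\sim\pi_X$.
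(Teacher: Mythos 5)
Your generator expansion and the identification of the $u$-averaged operator with the generator of \eqref{ridge.eq.limit.diff} (via the involution $(u,z)\mapsto(u+\ell z,-z)$ and the differentiated reversibility identity) are exactly the computations in Propositions \ref{ridge.prop:limiting.generator} and \ref{pr:ident} of the paper, so that part is sound. The genuine gap is in the step you yourself flag as "the hard part": replacing the time average of the $u$-dependent operator by its $e^{B(x,u)}$-average. Both routes you propose --- an $L^2$ decorrelation bound on $P^m g$ for the frozen-$x$ chain, or a corrector $\chi$ solving the fast Poisson equation $L_0^U\chi=\overline{\Gen}f-L_2f$ --- require quantitative, $x$-uniform mixing of the RWM chain on $u$ reversible w.r.t.\ $e^{B(x,u)}\,du$ (a spectral gap, or at least summability of correlations, uniformly in $x$). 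Nothing in Assumptions \ref{ass:F}--\ref{ass:exist} supplies this: the hypotheses are only Lipschitz derivatives of $A,B$ and second moments of $\pi$, under which the frozen chain need not have a spectral gap at all, and the Poisson equation need not admit a solution with controlled growth. As written, your argument would need to be supplemented by an additional mixing hypothesis that the theorem does not assume.

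The paper circumvents this entirely with an intermediate time scale: it subsamples the chain every $\lfloor\epsilon^{-\gamma}\rfloor$ steps with $\gamma\in(0,\tfrac12)$, couples the true chain on each block with a \emph{pinned} chain $\{(X^\star_j,U^\star_j)\}$ whose $x$-coordinate is frozen (driven by the same uniforms and Gaussians), and shows $\P{U^\star_j\neq U_{\epsilon,j}}\lesssim j^2\epsilon$, which is $o(1)$ over a block precisely because $\gamma<\tfrac12$. On the pinned chain, started in (conditional) stationarity, the Ces\`aro average of $\A\phi(x,U^\star_j)$ over $\lfloor\epsilon^{-\gamma}\rfloor\to\infty$ steps converges to $\int\A\phi(x,u)e^{B(x,u)}\,du$ by the plain ergodic theorem --- no rate, no spectral gap, no corrector. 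The convergence of $\wtilde{S}_\epsilon$ is then closed via \cite[Ch.~4, Theorem 8.2 and Corollaries 8.5--8.6]{ethier1986markov}, and the discrepancy $\|\wtilde{S}_{\epsilon,\cdot}-\wtilde{X}_{\epsilon,\cdot}\|_{\infty,[0,T]}$ is killed by a maximal inequality. If you want to complete your proof without importing the paper's subsampling device, you must either add a uniform-mixing assumption or reproduce some equivalent qualitative-ergodicity argument; the tightness and martingale-problem portions of your plan are otherwise consistent with the paper's.
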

%
%
Under Assumption \ref{ass:exist}, the limiting diffusion \eqref{ridge.eq.limit.diff} is well defined, does not explode in finite time, and has a unique strong solution. 
The assumption is as stated in \citep[ Ch. 4, Theorem 1.6]{ethier1986markov}. Furthermore, Assumption \ref{ass:exist} yields that for a smooth and compactly supported test function $\phi$ the function $x \mapsto \Gen \phi(x)$, with 
\begin{equation}
\label{eq:gene}
\Gen=\tfrac{\ell^2}{2} \nabla\{a_{0}(x,\ell) \log \pi(x)\}\cdot \nabla
+ \tfrac{\ell^2}{2}\,a_0(x,\ell)\,(\nabla \cdot \nabla)
\end{equation}
 the generator of the limiting diffusion \eqref{ridge.eq.limit.diff}, is $\mu$-Holderian.
The diffusive time scale $c(\epsilon) = \epsilon^{-2}$ implies that the algorithmic complexity 
of  RWM grows as $\OO(\epsilon^{-2})$ as the thickness $\epsilon$ approaches zero; see \cite{roberts2014complexity} for general results on the complexity analysis of MCMC algorithms through diffusion limits.
In the case where the function $(x,y) \mapsto B(x,y)$ does not depend 
on the $x$-coordinate, the limiting acceptance probability $a_0$ does not 
depend on the local position $x$, i.e.\@ $a_0(x,\ell) = a_0(\ell)$. In this case
the optimal value for the parameter $\ell$ is given by 
$\ell_* = \argmax \; \ell^2 \, a_0(\ell)$. In the general case however, 
the optimal choice of   $\ell$ will depend on the current $x$-position.

\subsection{Extending Theorem \ref{ridge.th:main}}

We can also adopt slightly 
more general proposals where the variance of the proposals is allowed to depend 
on the current position. That is, the tuning parameter $\ell = \ell(x) > 0$ is now allowed to depend 
on the $x$-coordinate:
\begin{align} \label{ridge.eq.general.proposal}
\left(\begin{array}{c} 
X_{\epsilon}' \\ Y'_{\epsilon}  
\end{array}\right) = 
\left(\begin{array}{c} 
X_{\epsilon} \\ Y_{\epsilon}  
\end{array}\right) + 
\ell(x) \, \epsilon \,\left(\begin{array}{c} 
Z_x \\ Z_y  
\end{array}\right) \ .
\end{align}
%
We state the required assumptions on the function $x \mapsto \ell(x)$ that allows the relevant diffusion limit results to hold. 
\begin{assumptions}[Regularity Assumptions on $x \mapsto \ell(x)$]
\label{ridge.assump.ell}
The function $\ell$ is positive, bounded away from zero and infinity. 
The first two derivatives of $\ell$ are   bounded.
\end{assumptions}
\noindent  
We choose to work in the limited setup of Assumption \ref{ridge.assump.ell} so that the proof of the next theorem is a straightforward adaptation of Theorem \ref{ridge.th:main}.
The accelerated version \eqref{ridge.eq:xepsilon} of the $x$-coordinate process again converges to a diffusion process.
\begin{theorem}
\label{ridge.th:main.general.proposal}
Let $T > 0$ be a fixed finite time horizon. Assume that 
Assumptions \ref{ridge.assump.pi}-\ref{ridge.assump.ell}  hold and  that 
the RWM algorithm  is started in stationarity.
As $\epsilon \to 0$, the sequence of processes 
$\{ \wtilde{X}_{\epsilon,t} \}_{t \in [0,T]}$ defined via (\ref{ridge.eq.general.proposal}), (\ref{ridge.eq:xepsilon}) converges weakly in the 
Skorokhod space $\skorokhod([0,T], \RR^{n_x})$ to the diffusion process $\{\overline{X}_t\}_{t \in [0,T]}$
specified as the solution of the stochastic differential equation
\begin{align} \label{ridge.eq.limit.diff.general}
d \overline{X} = \drift_{\pi_X, \sigma^2}(\overline{X}) \, dt  + \sigma(\overline{X})\,dW\  .
\end{align}
The local volatility function is $\sigma^2(x) \equiv \ell^2(x) \, a_0 (x,\ell(x))$. The initial distribution is $\overline{X}_0 \sim \pi_X$.
\end{theorem}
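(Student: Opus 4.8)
The plan is to reproduce, \emph{mutatis mutandis}, the three-step argument carried out for Theorem~\ref{ridge.th:main} in Section~\ref{ridge.proof.main}: first, compute the generator of the accelerated chain applied to test functions of the $x$-coordinate and average out the fast $u$-coordinate; second, prove tightness of $\{\wtilde X_{\epsilon,t}\}$ in $\skorokhod([0,T],\RR^{n_x})$; third, identify every subsequential limit as the unique solution of the martingale problem for the generator of \eqref{ridge.eq.limit.diff.general}. Throughout I would use that, by stationarity, $(X_{\epsilon,k},U_{\epsilon,k})\sim\pi$ for every $k$, so that the conditional law of $U_{\epsilon,k}$ given $X_{\epsilon,k}=x$ is $\pi_{U\mid X}(\cdot\mid x)=e^{B(x,\cdot)}$ and the required $u$-average is exactly the integral against $e^{B(x,u)}\,du$ appearing in \eqref{ridge.eq.mean.acceptance}. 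The only genuinely new ingredient relative to Theorem~\ref{ridge.th:main} is the position dependence of the jump size $\ell=\ell(x)$, which I would treat as a perturbation of the constant-$\ell$ calculation, all additional derivatives being controlled by Assumption~\ref{ridge.assump.ell}.

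For the generator I would fix $\phi\in C_c^\infty(\RR^{n_x})$ and Taylor expand $\phi(X'_\epsilon)-\phi(x)$ along \eqref{ridge.eq.general.proposal}, whose $x$-increment is $\ell(x)\,\epsilon\,Z_x$ and whose rescaled $u$-increment is $\ell(x)\,Z_y$. Since the proposal is now asymmetric, I would take the acceptance probability to be $F$ of the full Metropolis--Hastings log-ratio, which adds to the target-density term of \eqref{ridge.eq:accept} (with $\ell\mapsto\ell(x)$) a proposal-density term built from the ratio of the Gaussian kernels at $\ell(x)$ and $\ell(x')$; by \eqref{ridge.eq.reversibility} this preserves reversibility with respect to $\pi_\epsilon$ for any admissible $F$. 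Multiplying by $c(\epsilon)=\epsilon^{-2}$, the second-order Taylor term is $\OO(1)$ and, since the acceptance tends to $F(r_0)$ with $r_0=B(x,u+\ell(x)Z_y)-B(x,u)$ which is independent of $Z_x$, it converges after averaging $u$ against $e^{B(x,u)}$ and using $\E{Z_xZ_x^\top}=I_{n_x}$ to $\tfrac12\,\ell^2(x)\,a_0(x,\ell(x))\,(\nabla\cdot\nabla)\phi$, i.e. the second-order part of the limiting generator with $\sigma^2(x)=\ell^2(x)a_0(x,\ell(x))$. The first-order Taylor term is of order $c(\epsilon)\,\epsilon=\epsilon^{-1}$, but its leading piece vanishes because $F(r_0)$ is $Z_x$-independent and $\E{Z_x}=0$; the surviving $\OO(1)$ drift thus arises from expanding the acceptance to first order around $r_0$. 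I expect the target-ratio contribution to reproduce, exactly as for Theorem~\ref{ridge.th:main} but with $\ell\mapsto\ell(x)$, the term $\tfrac{\ell^2(x)}{2}\,(\tilde\nabla a_0+a_0\,\nabla A)$ (with $\tilde\nabla$ differentiating $a_0(\cdot,\ell)$ at frozen $\ell=\ell(x)$), the key identity being the detailed-balance symmetrisation of the frozen-$x$ $u$-chain reversible for $e^{B(x,\cdot)}$ combined with $e^rF(-r)=F(r)$. The $\nabla\ell$ terms produced by the Hastings ratio, through the dependence of $\ell(x')$ on $x'=x+\ell(x)\epsilon Z_x$, should after elementary Gaussian integration-by-parts supply precisely the missing pieces $\ell(x)\,a_0\,\nabla\ell+\tfrac{\ell^2(x)}{2}\,\partial_\ell a_0\,\nabla\ell$, so that the two contributions assemble into the total derivative $\tfrac12\nabla\{\ell^2(x)a_0(x,\ell(x))\}+\tfrac12\ell^2(x)a_0\,\nabla A=\drift_{\pi_X,\sigma^2}(x)$; all remainders would be shown to be $o_{L_1(\pi)}(1)$ using Assumption~\ref{ass:F}, the Lipschitz bounds on $A',B'$ and the moment bound \eqref{ridge.eq.assumption.moment} from Assumption~\ref{ridge.assump.pi}, and the bounds on $\ell,\ell',\ell''$ from Assumption~\ref{ridge.assump.ell}.

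Writing $\Gen\phi$ for this $u$-averaged limit, I would decompose $\phi(\wtilde X_{\epsilon,t})-\phi(\wtilde X_{\epsilon,0})$ as a martingale plus a time integral of the one-step conditional drift, replace the latter by $\int_0^t\Gen\phi(\wtilde X_{\epsilon,s})\,ds$, and conclude by the martingale-problem method. Tightness would follow from the boundedness of the limiting coefficients and the $\OO(\epsilon)$ size of the one-step $x$-increments, via the same moment criterion as for Theorem~\ref{ridge.th:main}. Assumptions~\ref{ass:exist}--\ref{ridge.assump.ell} (together with the smoothness of $a_0$ in its second argument, inherited from Assumption~\ref{ridge.assump.pi}) make $\sigma^2(x)=\ell^2(x)a_0(x,\ell(x))$ bounded, bounded away from $0$, and $\mu$-H\"older with bounded H\"older drift, so the martingale problem for $\Gen$ is well posed \citep[Ch.~4, Thm~1.6]{ethier1986markov}, every limit point is the law of the unique solution of \eqref{ridge.eq.limit.diff.general}, and the initial law is $\pi_X$ by stationarity.

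The hard part will be the averaging hidden in the generator step: replacing $\Gen_\epsilon\phi(X_{\epsilon,k},U_{\epsilon,k})$ by its conditional $u$-average $\Gen\phi(X_{\epsilon,k})$. The centred quantity $\Gen_\epsilon\phi(x,u)-\Gen\phi(x)$ has zero mean under $\pi_{U\mid X}(\cdot\mid x)$, and I must show that its partial sums, divided by $c(\epsilon)$, vanish in $L_1(\pi)$. This is where the separation of scales is essential: over the diffusive horizon the $u$-chain makes $\OO(\epsilon^{-2})$ moves of $\OO(1)$ size and equilibrates to $e^{B(x,\cdot)}$ far faster than $x$ evolves, so the temporal correlations of the centred term decay quickly enough for the rescaled sums to be negligible. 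Turning this asymptotic decoupling into a quantitative estimate, uniformly in $x$ --- for which the lower bound on $\ell(x)$ in Assumption~\ref{ridge.assump.ell} is essential --- is the crux; it is the very estimate already established for Theorem~\ref{ridge.th:main}, and it transfers directly because the frozen-$x$ dynamics remains a random-walk Metropolis step for $e^{B(x,\cdot)}$ with $\OO(1)$ increment $\ell(x)Z_y$.
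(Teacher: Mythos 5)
Your proposal follows essentially the same route as the paper: the paper's own proof of Theorem~\ref{ridge.th:main.general.proposal} is a short adaptation of Theorem~\ref{ridge.th:main} --- same subsampled process at the intermediate time scale, same Ethier--Kurtz generator-convergence framework, same frozen-$x$ coupling for the averaging step --- with the only substantive changes being the Metropolis--Hastings proposal-ratio correction and the resulting extra $\nabla\ell$ terms in the limiting drift, both of which you identify correctly (your bookkeeping of how the Hastings ratio supplies $\ell\,a_0\,\nabla\ell+\tfrac12\ell^2\,\partial_\ell a_0\,\nabla\ell$ via Gaussian integration by parts is in fact more explicit than the paper's modified formula for $\A\phi$). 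The only cosmetic differences are that you phrase the limit identification as a martingale problem and describe the averaging step via correlation decay, whereas the paper uses semigroup convergence and a rate-free ergodic theorem for the frozen-$x$ chain; since you explicitly defer to the Theorem~\ref{ridge.th:main} estimates for that step --- exactly as the paper does --- this changes nothing of substance.
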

The only difference with Theorem \ref{ridge.th:main} is the form of the volatility function $\sigma$.
As before, the limiting distribution \eqref{ridge.eq.limit.diff.general} is reversible with respect to $\pi_X$ and 
the Dirichlet form reads
\begin{align}
\dirichlet(\phi) \equiv \tfrac{1}{2} \, \int_{\RR^{n_x}} \big\|\nabla \phi(x) \big\|^2 \, \ell^2(x) \,a_0\big(x,\ell(x) \big) \; \pi_X(dx)\ .
\end{align}
Since the parameter $\ell=\ell(x)$ is a function of the $x$-coordinate, the optimal 
choice $\ell_*(x)$ for the tuning parameter $\ell$ is
\begin{align}
\label{ridge.eq.pointwise.max}
\ell_\star(x) \equiv \argmax_{\ell > 0} \; \ell^2 \, a_0(x,\ell)\ . 
\end{align}
As described in \cite{robertsminimising},
the choice \eqref{ridge.eq.pointwise.max} maximises the $\esjd$, the spectral gap 
of the limiting diffusion \eqref{ridge.eq.limit.diff.general}
and the asymptotic variance of MCMC estimators.

%
%
\subsection{High-Dimensional Asymptotics and Local $0.234$ Rule}
In this section, we investigate the behaviour of the optimal jumping rule $x \mapsto \ell_\star(x)$ in the regime where the dimensionality of the identifiability is large i.e. $n_y \gg 1$. We adopt the setting where the dimension $n_x$ remains fixed while the dimension $n_y$ grows to infinity. For simplicity, we postulate a product form for the $y$-marginal density. That is, we investigate the sequence of densities $$\pi^{(n_y)}_\epsilon(x,y) = e^{A(x) + B^{(n_y)}(x,y/\epsilon)}\ , \quad x\in\RR^{n_x}\,, \,\,y\in \RR^{n_y}\ , $$ in the regime where $n_y \to \infty$ and $\epsilon \to 0$, with 
\begin{align*}
B^{(n_y)}(x,y) = \sum_{j=1}^{n_y} b(x,y_j)
\end{align*}
where $b: \RR^{n_x} \times \RR \mapsto \RR$ is such that for any $x \in \RR^{n_x}$ the function $y \mapsto e^{b(x,y)}$ is a proper density function on $\RR$. We denote by $\ell_\star^{(n_y)}(x_0)$ the optimal value of the jump size when the $x$-coordinate of the MCMC algorithm exploring $\pi_\epsilon^{(n_y)}$ stands at $x_0 \in \RR^{n_x}$. With the obvious notations, the previous sections show that $\ell_\star^{(n_y)}(x_0) = \argmax \, \big\{\, \ell^2 \, a^{(n_y)}_0(x_0,\ell) : \ell > 0 \big\}$ where
\begin{align*}
a^{(n_y)}_0(x_0,\ell)
=
\EE\,\Big[\, F\Big(\,\sum_{j=1}^{n_y} b(x_0, Y^{(x_0)}_j + \ell \, Z_j) - b(x_0, Y^{(x_0)}_j)\,\Big)\,\Big]
\end{align*}
for an i.i.d.\@ sequence $\{Y^{(x_0)}_j\}_{j \geq 1}$ of $\RR$-valued random variables with distribution $e^{b(x_0, y)} \, dy$ and an i.i.d.\@ sequence of standard Gaussian random variables $\{Z_j\}_{j \geq 0}$. We assume the following regularity conditions on the marginal density $y \mapsto e^{b(x_0, y)} \equiv \mu_{(x_0)}(y)$; this is the equivalent of the conditions (A1) and (A2) of \cite{robe:97}.
\begin{assumptions}
\label{ridge.assump.b}
The density $\mu_{(x_0)}(y)$ is twice differentiable, the function $y \mapsto \mu_{(x_0)}'(y)/\mu_{(x_0)}(y)$ is Lipschitz continuous and 
the random variable $Y^{(x_0)}$ with density $\mu_{(x_0)}$ is such that
\begin{align*}
\EE\,\Big[\,\big(\, \tfrac{\mu'_{(x_0)}}{\mu_{(x_0)}}(Y^{(x_0)}) \,\big)^8 \,\Big] < \infty
\quad \textrm{and} \quad
\EE\,\Big[\, \big(\,\tfrac{\mu''_{(x_0)}}{\mu_{(x_0)}}(Y^{(x_0)}) \,\big)^4 \,\Big] < \infty\ .
\end{align*}
\end{assumptions}

\noindent
A simple adaptation of Corollary $1.2$ of \cite{robe:97} yields that, under Assumption \ref{ridge.assump.b}, we have
\begin{align*}
\lim_{n_y \to \infty} \; a^{n_y}_0(x_0, \ell \, n_y^{-1/2})
=
\EE\,\big[\,F\,\big(\,\Normal(-I^2_{(x_0)}/2,I^2_{(x_0)})\,\big)\,\big] \equiv \overline{a}_0(\ell)
\quad \textrm{with} \quad
I^2_{(x_0)} = 
\EE\,\Big[\,\big(\tfrac{\mu'_{(x_0)}}{\mu_{(x_0)}}(Y^{(x_0}) \,\big)^2\,\Big]\ . \end{align*}
Corollary $1.2$ of \cite{robe:97} corresponds to the special case $F=F_{\text{MH}}$ where a closed form expression for $\overline{a}_0(\ell)$ exists; for concreteness, we will also consider the special case $F=F_{\text{MH}}$ although generalization to arbitrary accept/reject functions is straightforward. The function $\ell \mapsto \ell^2 \times \overline{a}_0(\ell)$ is maximized for $\ell_\star$ such that $\overline{a}_0(x_0, \ell_\star) = 0.234$ (to three decimal places) \cite{robe:97}; in other words, as $n_y \to \infty$, the optimal jumping rule $\ell_\star(x)$ can be chosen such that the local acceptance rate at $x_0 \in \RR^{n_x}$ equals approximately $0.234$. Indeed, the derivation of this rule-of-thumb relies on the product form assumption of the $y$-marginal and is only valid in the asymptotic regime $n_y \to \infty$. Nevertheless, this type of analysis has been shown to be empirically and theoretically \cite{breyer2004optimal,bedard2007weak,
beskos2009optimal,BedardRosenthal:2008,mattingly2012diffusion,PST12} relevant to more general  distribution structures.

%
%
\section{Proof of Diffusion Limit}
\label{sec:proof}
This section gives rigorous proofs of Theorems \ref{ridge.th:main} and \ref{ridge.th:main.general.proposal}.
The proof is based on \citep[Ch.~$4$, Theorem $8.2$]{ethier1986markov} which gives conditions under which 
the finite dimensional distributions of a sequence of
processes converge weakly to those of a Markov process.
\cite[Ch.~$8$, Corollary $8.6$]{ethier1986markov}
provides further conditions for this sequence 
of processes to be relatively compact in the appropriate topology 
and thus establish weak convergence of the
stochastic processes themselves. 

\subsection{Proof of Theorem \ref{ridge.th:main}}
\label{ridge.proof.main}
We first give a high-level description of the proof.
We introduce an intermediate time scale $\wtilde{c}(\epsilon) = \epsilon^{-\gamma}$ for an exponent $\gamma \in (0,\tfrac{1}{2})$,  
and consider the sub-sampled process $\{(S_{\epsilon,k}, V_{\epsilon,k})\}_{k \geq 0}$ defined as
\begin{align*}
\left\{
\begin{array}{ll}
(S_{\epsilon,0}, \, S_{\epsilon,1}, \, S_{\epsilon,2}, \, \ldots) 
\!\!&\!\!= (X_{\epsilon,0}, X_{\epsilon,\floor{\wtilde{c}(\epsilon)}}, 
\, X_{\epsilon,\floor{2 \, \wtilde{c}(\epsilon)}}, \, \ldots)\ , \\
(V_{\epsilon,0}, \, V_{\epsilon,1}, \, V_{\epsilon,2}, \, \ldots) 
\!\!&\!\!= (U_{\epsilon,0}, U_{\epsilon,\floor{\wtilde{c}(\epsilon)}}, 
\, U_{\epsilon,\floor{2 \, \wtilde{c}(\epsilon)}}, \, \ldots)\ .
\end{array}
\right.
\end{align*}
%
On this time-scale 
the $x$-process evolves slowly (i.e. $S_{\epsilon,k} \approx S_{\epsilon,k+1}$) 
whereas the $y$-process has the time to mix (i.e. $V_{\epsilon,k+1}$ 
is approximately independent from $V_{\epsilon,k}$). We define the continuous-time accelerated processes
\begin{align}
\label{ridge.eq:sepsilon}
\wtilde{S}_{\epsilon,t} = S_{\epsilon, \floor{ t \cdot c(\epsilon) / \wtilde{c}(\epsilon) }}\equiv 
S_{\epsilon, \floor{ t \cdot \epsilon^{\gamma-2} }}
\quad \textrm{and} \quad
\wtilde{V}_{\epsilon,t} \equiv 
V_{\epsilon, \floor{ t \cdot \epsilon^{\gamma-2} }}\ .
\end{align}
See Fig.\ref{fig:trace_i} for a graphical representation of all four main processes involved in our derivations. 
%

\begin{figure}[!h]
\vspace{-0.1cm}
\begin{center}
\includegraphics[width=0.9\textwidth, height=0.5\textheight]{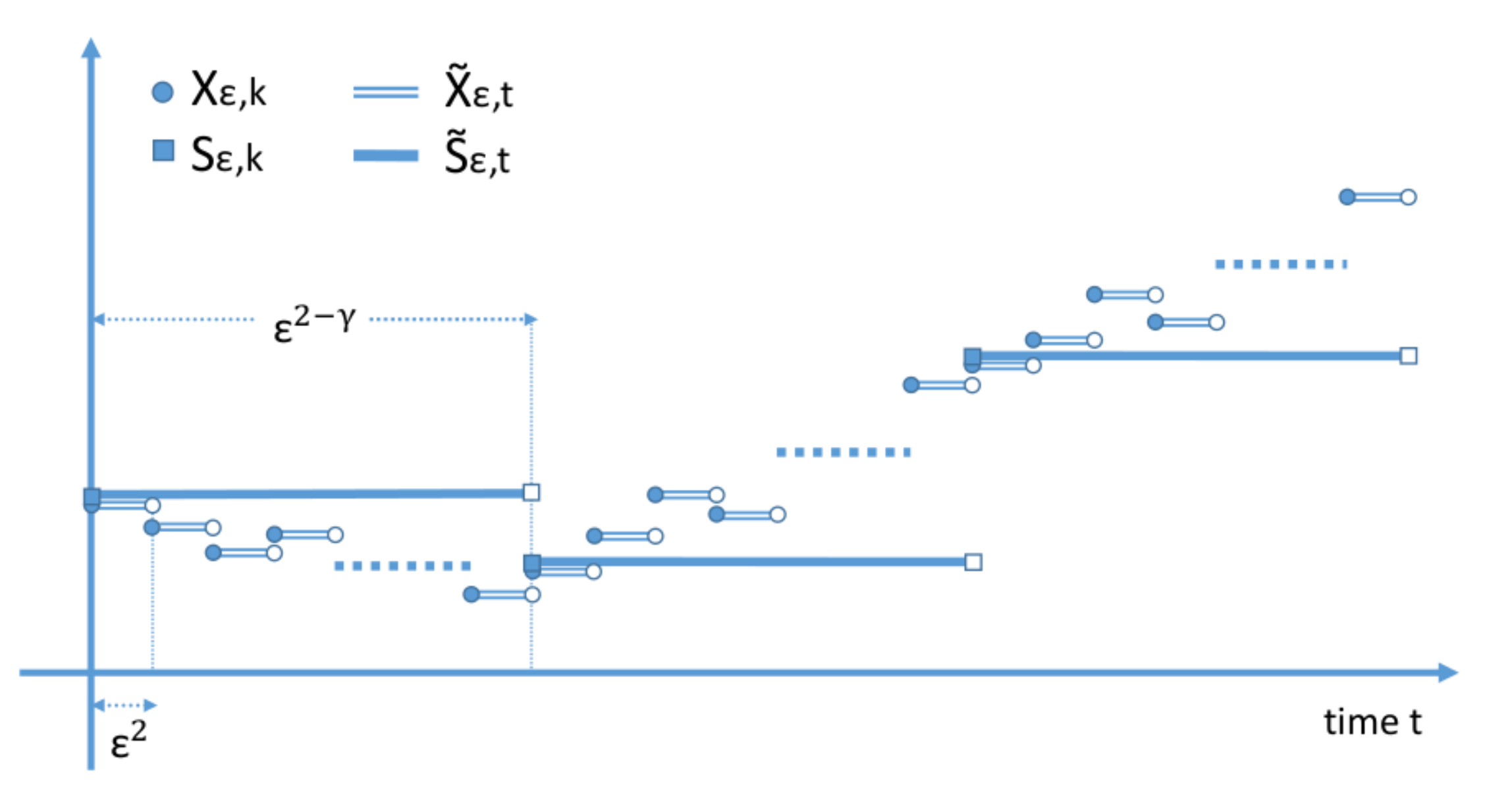} \qquad \qquad
\end{center}
\caption{The four processes involved in the diffusion limit result.
The discrete-time process $\{S_{\epsilon,k}\}_{k\ge 0}$ corresponds to a thinning (of frequency $1/\lfloor \epsilon^{-\gamma}\rfloor$) 
of $\{X_{\epsilon,k}\}_{k\ge 0}$; the first is illustrated with the filled rectangles, the latter with the filled circles.
Then, the continuous-time c\`adl\`ag process 
$\{\wtilde{S}_{\epsilon,t}\}_{t\ge 0}$ is illustrated with the filled lines
 and the $\{\wtilde{X}_{\epsilon,t}\}_{t\ge 0}$ one with the empty lines (the time instances indicated by the rectangles or the circles correspond to the jump times of the relevant processes).}
\label{fig:trace_i}
\end{figure}

The proof is divided into two steps. First, we show 
that the process $\wtilde{S}_{\epsilon}$  converges weakly in the Skorokhod topology to  the diffusion \eqref{ridge.eq.limit.diff}.
Then, we prove that the supremum norm 
\begin{align*}
\big\| \wtilde{S}_{\epsilon,\cdot} - \wtilde{X}_{\epsilon,\cdot} \big\|_{\infty, [0,T]}
\equiv \sup \Big\{ \; \big\| \wtilde{S}_{\epsilon,t} - \wtilde{X}_{\epsilon,t} \big\|\; : \; t \in [0,T] \; \Big\}
\end{align*}
converges to zero in probability, establishing 
the weak convergence of the sequence $\wtilde{X}_{\epsilon}$ itself towards the diffusion \eqref{ridge.eq.limit.diff}.
We define some quantities needed in the sequel. Recall the definition in (\ref{eq:gene}) 
of  the generator $\Gen$ 
of the limiting diffusion \eqref{ridge.eq.limit.diff}. Similarly,
we denote by $\Gen_\epsilon$ and $\wtilde{\Gen}_\epsilon$ the  generators of  the c\`adl\`ag processes  $\{\wtilde{X}_{\epsilon,t} \}_{t \geq 0}$ and    $\{\wtilde{S}_{\epsilon,t} \}_{t \geq 0}$ respectively. That is, let $\mathcal{C}$ be the space of smooth functions $\phi:\RR^{n_x}\mapsto \RR$ with compact support; for
any test function $\phi \in \mathcal{C}$ and  vectors $(x,u) \in \RR^{n_x} \times \RR^{n_y}$ we have 
\begin{align} \label{eq.generators}
\left\{
\begin{array}{ll}
\Gen_\epsilon \phi(x,u) 
\!\!&\!\!= \EE_{\epsilon,x,u} \big[ \phi(X_{\epsilon,1}) - \phi(x)\big] / \epsilon^2\ ,\\
\wtilde{\Gen}_{\epsilon} \phi(x,u)
\!\!&\!\! =\EE_{\epsilon,x,u}\big[
\phi(S_{\epsilon,1}) - \phi(x)\big] / \epsilon^{2-\gamma}\ .
\end{array}
\right.
\end{align}
Although the domain of $\phi$ is $\RR^{n_x}$, functions $\Gen_\epsilon \phi$ and $\wtilde{\Gen}_\epsilon \phi$ are defined on $\RR^{n_x} \times \RR^{n_y}$, as processes $\wtilde{X}$ and $\wtilde{S}$
are not Markovian with respect to their own filtration. 
The telescoping sum 
$
\phi(S_{\epsilon,1}) - \phi(S_{\epsilon,0}) =  
\phi(X_{\epsilon,1})
-
\phi(X_{\epsilon,0})
+
\cdots
+
\phi(X_{\epsilon,\floor{ \wtilde{c}(\epsilon)}})
-
\phi(X_{\epsilon,\floor{ \wtilde{c}(\epsilon) }-1})
$
yields that 
\begin{align}
\label{ridge.eq.telescop}
	\wtilde{\Gen}_\epsilon \phi(x,u) 
	&\equiv 
	\frac{1}{\epsilon^{-\gamma}} \,
	 \sum_{j=0}^{\floor{ \epsilon^{-\gamma}}-1}
	\EE_{\epsilon,x,u}\,\big[\, \Gen_{\epsilon} \phi (X_{\epsilon,j},U_{\epsilon,j}) \,\big]\ .
\end{align}
This identity is repeatedly used in the sequel. 
As a first step, 
we prove that for test function $\phi \in \mathcal C$,  generator 
$\Gen_{\epsilon} \phi (x,u)$ converges, in an appropriate sense, to the quantity $\A\phi(x,u)$ given by
\begin{align} 
\A\phi(x,u) 
= \ell^2 \, \Big\langle\,
\EE_{x,u}\,\big[\, F'(D B) &\nabla_x \big\{ A(x) + B(x,u + \ell Z_y) \big\} \,\big], \nabla \phi(x) \Big\rangle  \nonumber \\
&+ \tfrac{\ell^2}{2} \, \EE_{x,u}\,\big[\, F(D B)\,\big] \, \Delta \phi(x) \label{ridge.eq.def.A}\ , 
\end{align}
where for notational convenience we  have defined 
$D B = B(x,u+\ell Z_y) - B(x,u)$. Also, we have used the Laplacian notation 
$\Delta= \sum_{i=1}^{n_x}\partial^{2}_{x_i}$.
In general, $\A\phi(x,u)$ does not correspond to the generator of a Markov process. Note that if $\phi$ is smooth with compact support, under Assumptions \ref{ass:F}-\ref{ridge.assump.pi} we have that  $\abs{\A\phi(x,u)} \lesssim 1 + \norm{u}$.
We prove in Section \ref{ridge.sec.proof.limit.gen} the following result.
\begin{prop} \label{ridge.prop:limiting.generator}
Let Assumptions \ref{ass:F}-\ref{ridge.assump.pi} be satisfied 
and $\phi \in \mathcal C$ be a test function. 
Then, we have that $|\,\Gen_{\epsilon} \phi(x,u) - 
\A\phi(x,u) \,| \lesssim\epsilon\, \,(1+\|x\|+\|u\|)$,
thus 
 the following limit holds
\begin{align} \label{ridge.eq.generator.asymp}
\lim_{\epsilon \to 0} \; \EE_{\pi}\,\big|\,\Gen_{\epsilon} \phi(x,u) - 
\A\phi(x,u) \,\big|  \;=\; 0\ .
\end{align}
\end{prop}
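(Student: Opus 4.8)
The plan is to compute the one-step generator $\Gen_\epsilon\phi(x,u)$ exactly and match it to $\A\phi(x,u)$ by a coupled second-order Taylor expansion in $\epsilon$. Since the $x$-coordinate moves only upon acceptance, with proposal $x+\ell\epsilon Z_x$, we have
\[
\Gen_\epsilon\phi(x,u) = \frac{1}{\epsilon^2}\,\EE_{Z_x,Z_y}\!\Big[\big(\phi(x+\ell\epsilon Z_x)-\phi(x)\big)\,F(R_\epsilon)\Big],
\]
where $R_\epsilon = A(x+\ell\epsilon Z_x)-A(x)+B(x+\ell\epsilon Z_x,u+\ell Z_y)-B(x,u)$ is the log-acceptance ratio and $Z_x,Z_y$ are independent standard Gaussians. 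The crucial structural point is the diffusive scaling $\epsilon^{-2}$: the naive leading term is of order $\epsilon^{-1}$ and must cancel, which it does because $F(DB)$, with $DB=B(x,u+\ell Z_y)-B(x,u)$, is independent of $Z_x$ while $\EE[Z_x]=0$.

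First I would expand the two factors separately. For $\phi$, compact support bounds all derivatives up to third order, so
\[
\phi(x+\ell\epsilon Z_x)-\phi(x)=\ell\epsilon\,\langle\nabla\phi(x),Z_x\rangle+\tfrac{\ell^2\epsilon^2}{2}\,Z_x^\top\nabla^2\phi(x)\,Z_x+r_\phi,\qquad \abs{r_\phi}\lesssim\epsilon^3\norm{Z_x}^3.
\]
For the acceptance factor, write $R_\epsilon=DB+G_\epsilon$ with $G_\epsilon$ the $Z_x$-dependent remainder; since $\nabla A,\nabla_x B$ have linear growth (Lipschitz gradients, Assumption~\ref{ridge.assump.pi}), $G_\epsilon=\ell\epsilon\,\langle\nabla A(x)+\nabla_x B(x,u+\ell Z_y),Z_x\rangle+r_G$ with $\abs{r_G}\lesssim\epsilon^2\norm{Z_x}^2(1+\norm{x}+\norm{u}+\norm{Z_y})$. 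Because only $F'$ is assumed Lipschitz, I would expand $F$ to first order through the integral form of the remainder, $F(DB+G_\epsilon)=F(DB)+F'(DB)\,G_\epsilon+r_F$ with $\abs{r_F}\le\tfrac12 L_{F'}\,G_\epsilon^2$, so that no second derivative of $F$ is ever needed.

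Next I would multiply the two expansions, take $\EE_{Z_x}$ first (holding $Z_y$), and divide by $\epsilon^2$. The order-$\epsilon$ contribution $\ell\epsilon\langle\nabla\phi,Z_x\rangle F(DB)$ vanishes in expectation; the order-$\epsilon^2$ contributions, using $\EE[Z_xZ_x^\top]=I$, are exactly $\ell^2 F'(DB)\langle\nabla A+\nabla_x B(x,u+\ell Z_y),\nabla\phi(x)\rangle$ and $\tfrac{\ell^2}{2}F(DB)\,\Delta\phi(x)$. Taking then the outer expectation $\EE_{x,u}$ over $Z_y$ reproduces precisely the two terms defining $\A\phi(x,u)$. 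Everything else is a product containing at least one remainder; using the bounds above together with $\abs{r_F}\lesssim G_\epsilon^2$, boundedness of $F,F'$ and of $\nabla^k\phi$, and Gaussian moments of $Z_x,Z_y$, each such term is $O(\epsilon)$ after division by $\epsilon^2$. The compact support of $\phi$ confines $x$ to a fixed compact set wherever $\nabla\phi$ or $\nabla^2\phi$ appears, so the residual growth is carried only by $\nabla_x B(x,u+\ell Z_y)$ and is at most polynomial in $\norm{u}$ (linear for the dominant pieces), yielding the pointwise bound $\abs{\Gen_\epsilon\phi(x,u)-\A\phi(x,u)}\lesssim\epsilon\,(1+\norm{x}+\norm{u})$ asserted in the statement.

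I expect the main obstacle to be this last error-control step: one must keep every remainder term simultaneously of order $\epsilon$ \emph{and} of $\pi$-integrable growth in $(x,u)$. The delicate interaction is that the quadratic remainder $\abs{r_F}\lesssim G_\epsilon^2$ can a priori produce a factor $\norm{\nabla_x B}^2$, i.e.\ quadratic growth in $\norm{u}$; this is tamed by the compact support of $\phi$ (which strips the $\norm{x}$ growth from the surviving terms) and, for the $L_1$-conclusion, by the finite second-moment Assumption~\ref{ridge.assump.pi}. Finally, the stated limit $\EE_\pi\abs{\Gen_\epsilon\phi-\A\phi}\to0$ follows at once by integrating the pointwise bound against $\pi$, since $\EE_\pi[1+\norm{x}+\norm{u}]<\infty$ by the same moment assumption, and letting $\epsilon\to0$.
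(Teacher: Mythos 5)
Your proof follows essentially the same route as the paper's: a second-order Taylor expansion of $\phi$, a first-order expansion of the log-acceptance ratio inside $F$ using the Lipschitz continuity of $F'$, cancellation of the $O(\epsilon^{-1})$ term via $\EE[Z_x]=0$ together with the independence of $F(DB)$ from $Z_x$, and remainder control through Assumptions 1--2. The one (shared) delicate point is that the quadratic remainder $|r_F|\lesssim G_\epsilon^2$ naturally produces growth of order $(1+\|x\|+\|u\|)^2$ rather than the linear factor stated in the proposition --- you flag this honestly, and in either case the $L_1(\pi)$ limit follows from the finite-second-moment assumption, exactly as in the paper's own argument.
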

\subsubsection{Weak convergence of $\wtilde{S}_\epsilon$ to the limiting diffusion \eqref{ridge.eq.limit.diff}}
\label{sec.conv.wtildeS}

We start by proving that the weak convergence of $\wtilde{S}_\epsilon$ towards the limiting diffusion \eqref{ridge.eq.limit.diff} reduces to studying the behaviour of the difference $\wtilde{\Gen}_{\epsilon} \phi(X,U) - \Gen \phi(X)$ between the approximate and limiting generators.

\begin{prop} \label{prop.L1.conv.generator}
Let Assumptions \ref{ass:F}-\ref{ass:exist} hold. 
If the following limit holds
\begin{align} \label{e.finite.dim.conv}
\lim_{\epsilon \to 0} \; \EE_{\pi}\,\big|\, \wtilde{\Gen}_{\epsilon} \phi(x,u) - \Gen \phi(x)\, \big| = 0\ ,
\end{align}
 then 
as $\epsilon \to 0$ the sequence of  processes 
$\{ \wtilde{S}_{\epsilon,t} \}_{t \in [0,T]}$ 
converges weakly in the 
Skorokhod topology $\skorokhod([0,T], \RR^{n_x})$ to the diffusion process \eqref{ridge.eq.limit.diff}.
\end{prop}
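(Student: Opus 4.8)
The plan is to turn the generator estimate \eqref{e.finite.dim.conv} into weak convergence through the martingale-problem machinery of \citep{ethier1986markov}, invoking \citep[Ch.~4, Theorem~8.2]{ethier1986markov} for the convergence of the finite-dimensional distributions and \citep[Ch.~8, Corollary~8.6]{ethier1986markov} for relative compactness. The limit is characterised by the generator $\Gen$ of \eqref{eq:gene}; Assumption~\ref{ass:exist} guarantees that the associated martingale problem is well-posed (unique, non-exploding solution). It therefore suffices to show that (i) the family $\{\wtilde S_\epsilon\}$ is relatively compact in $\skorokhod([0,T],\RR^{n_x})$, and (ii) every weak limit point solves the martingale problem for $\Gen$ with initial law $\pi_X$; well-posedness then forces the whole family to converge to the solution \eqref{ridge.eq.limit.diff}.

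For step (ii), I would exploit that the pair $(\wtilde S_\epsilon,\wtilde V_\epsilon)$ is a time-changed Markov process whose generator, applied to $\phi\in\mathcal C$, is precisely $\wtilde\Gen_\epsilon\phi$. A Dynkin computation (exact up to an $O(\epsilon^{2-\gamma})$ boundary correction coming from the embedding of the discrete chain into continuous time) shows that
\[
M^\phi_{\epsilon,t} \;=\; \phi(\wtilde S_{\epsilon,t}) - \phi(\wtilde S_{\epsilon,0}) - \int_0^t \wtilde\Gen_\epsilon\phi\big(\wtilde S_{\epsilon,s},\wtilde V_{\epsilon,s}\big)\,ds
\]
is a martingale. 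The key move is to replace $\wtilde\Gen_\epsilon\phi$ by $\Gen\phi$ inside the integral. Because the algorithm is started in stationarity, $(\wtilde S_{\epsilon,s},\wtilde V_{\epsilon,s})\sim\pi$ for each fixed $s$, so Fubini and the hypothesis \eqref{e.finite.dim.conv} give
\[
\EE\!\int_0^t\big|\wtilde\Gen_\epsilon\phi(\wtilde S_{\epsilon,s},\wtilde V_{\epsilon,s}) - \Gen\phi(\wtilde S_{\epsilon,s})\big|\,ds \;=\; t\,\EE_{\pi}\big|\wtilde\Gen_\epsilon\phi - \Gen\phi\big| \;\longrightarrow\; 0 .
\]
Passing to the limit along a convergent subsequence (available by step (i)), and using the boundedness of $\phi$ and of $\Gen\phi$ from Assumption~\ref{ass:exist} to secure uniform integrability, I conclude that $\phi(\overline X_t)-\phi(\overline X_0)-\int_0^t\Gen\phi(\overline X_s)\,ds$ is a martingale for every $\phi\in\mathcal C$, i.e.\ the limit point $\overline X$ solves the martingale problem for $\Gen$.

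For step (i) I would verify the criterion of \citep[Ch.~8, Corollary~8.6]{ethier1986markov}. The compact-containment condition is immediate from stationarity: each marginal satisfies $\wtilde S_{\epsilon,t}\sim\pi_X$, and $\pi_X$ is tight with finite second moment by \eqref{ridge.eq.assumption.moment}. The control of the increments reduces to a uniform $L_1(\pi)$ bound on the approximating generator; combining \eqref{e.finite.dim.conv} with the pointwise estimate $|\A\phi(x,u)|\lesssim 1+\|u\|$ and Proposition~\ref{ridge.prop:limiting.generator} yields $\sup_\epsilon\EE_\pi|\wtilde\Gen_\epsilon\phi|<\infty$. Together with the telescoping identity \eqref{ridge.eq.telescop} and stationarity, this bounds the conditional oscillation of $\phi(\wtilde S_\epsilon)$ over short time windows and delivers the required Aldous-type tightness estimate.

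The principal obstacle is the structural mismatch between the two generators: $\wtilde\Gen_\epsilon\phi$ is genuinely a function of both coordinates $(x,u)$ while $\Gen\phi$ depends on $x$ alone --- which is exactly why $\wtilde S_\epsilon$ fails to be Markov in its own filtration. The replacement of $\wtilde\Gen_\epsilon\phi$ by $\Gen\phi$ can hold only in the averaged $L_1(\pi)$ sense supplied by \eqref{e.finite.dim.conv}, so the delicate point is to ensure this averaging survives the passage to the martingale limit along trajectories rather than merely in expectation. Stationarity of the initial law (so that every time marginal equals $\pi$) and the uniform integrability furnished by the moment bound \eqref{ridge.eq.assumption.moment} are the two ingredients that make this rigorous; the deeper estimate \eqref{e.finite.dim.conv} itself, which encodes the fast mixing of the $u$-coordinate on the intermediate scale $\wtilde c(\epsilon)$, is taken as given here and proved separately.
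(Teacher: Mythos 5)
Your proposal follows essentially the same route as the paper: the Ethier--Kurtz generator-convergence machinery of Chapter 4, with stationarity of the chain used to convert the $L_1(\pi)$ hypothesis \eqref{e.finite.dim.conv} into the conditions of Theorem $8.2$/Corollary $8.5$ (finite-dimensional distributions, which you phrase equivalently via the well-posed martingale problem) and of Corollary $8.6$ (relative compactness), including the same discrete-time boundary correction. The one point to tighten is in the compactness step: condition $(8.33)$ there requires a uniform $L_p(\pi)$ bound on $\wtilde{\Gen}_{\epsilon}\phi$ for some $p>1$, not merely $L_1$, which your own estimate $\abs{\A\phi(x,u)}\lesssim 1+\norm{u}$ combined with the second-moment bound \eqref{ridge.eq.assumption.moment} does deliver for $p\in(1,2)$, exactly as in the paper.
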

\begin{proof} For clarity, the proof is divided into two main steps. First, one proves that the finite dimensional distributions of $\wtilde{S}_\epsilon$ converge to those of the limiting diffusion. Secondly, one proves that the sequence $\wtilde{S}_\epsilon$ is relatively weakly compact in the appropriate topology.\\[0.3cm]
\noindent
{\it Convergence of the finite dimensional distributions of $\wtilde{S}_\epsilon$.}\\
%
We follow closely \citep[Ch.~$4$, Theorem $8.2$ and Corollary $8.5$]{ethier1986markov}
and apply those results for the  set of functions 
$E = \{(\phi, \Gen\phi):\,\phi\in \mathcal{C}  \}$.
Under Assumption \ref{ass:exist}, \citep[Ch. $8$, Theorem $1.6$]{ethier1986markov} gives 
that the closure of $E$ generates a Feller semigroup $\{T(t)\}$ (corresponding 
to the solution $X$ of the SDE) on the Banach space $L$ 
of continuous functions vanishing at infinity.
Thus, all conditions at the statement of \cite[Chapter $4$, Theorem $8.2$]{ethier1986markov}
are satisfied; it remains to prove part (e) of \cite[Ch.~$4$, Corollary $8.5$]{ethier1986markov}.
Given an arbitrary test function $\phi\in \mathcal{C}$, we 
set $f_{\epsilon}(x,u) = \phi(x)$ 
and $g_{\epsilon}(x,u)=\wtilde{\Gen}_{\epsilon}\phi(x,u)$.
We need to prove Equations $(8.8)$-$(8.9)$ and $(8.11)$ of \cite[Ch.~$4$, Theorem $8.2$]{ethier1986markov} (Equation $(8.10)$ is trivially satisfied); that is, one must show that
\begin{gather*}
\sup_{\epsilon>0}\,\sup_{t\le T}\,\,\, \EE\,|\,\phi(\wtilde{S}_{\epsilon,t}) | <\infty\ ; \\
\sup_{\epsilon>0}\,\sup_{t\le T}\,\,\, \EE\,|
\,\wtilde{\mathcal{\Gen}}_{\epsilon}\phi(\wtilde{S}_{\epsilon,t}, \wtilde{V}_{\epsilon,t})\,| <\infty\ ;\\
\lim_{\epsilon\rightarrow 0} \E{ \,\BK{ \wtilde{\Gen}_{\epsilon}\phi(\wtilde{S}_{\epsilon,t}, \wtilde{V}_{\epsilon,t}) - \Gen\varphi(\wtilde{S}_{\epsilon,t}) }\,\prod_{i=1}^{k}h_i(\wtilde{S}_{\epsilon,t_i})\, }  = 0\ ,
\end{gather*}
for any $k\ge 1$, any times $0\le t_1<\cdots <t_k\le t \le T$,
and any functions $h_i$ that can be assumed to be bounded.
Indeed, \citep[Ch. $8$, Theorem $1.6$]{ethier1986markov} shows that the above equations 
involving the generators and expectations at infinitesimally small increments from instance $t$
can imply convergence over finite times.  
The first requirement follows from the boundedness of $\phi$;
the second requirement is implied by the stationarity of
$(\wtilde{S}_{\epsilon,t}, \wtilde{V}_{\epsilon,t})$, 
Equation \eqref{e.finite.dim.conv} and the fact that $\EE_{\pi} \abs{ \Gen \phi(x) }<\infty$ 
(we have that $\sup_{x\in\RR^{n_x}}\abs{ \Gen \phi(x) }<\infty$ from the boundedness of the gradient of the drift function of the limiting diffusion 
on compact domains, since it is continuous from Assumption \ref{ass:exist}).
The third requirement is also implied from \eqref{e.finite.dim.conv}.
We have now proven the required convergence of the finite dimensional distributions 
of $\wtilde{S}_{\epsilon,t}$ to those of the solution of the limiting SDE (\ref{ridge.eq.limit.diff}).
\\[0.3cm]
\noindent
{\it Relative weak pre-compactness of $\wtilde{S}_\epsilon$}.\\
We follow \cite[Ch. $4$, Corollary $8.6$]{ethier1986markov}.
First, we remark that  the process $\wtilde{S}_\epsilon$ is started at stationarity
and the space  $\mathcal{C}\subset L$ of smooth functions with compact support is an
algebra that strongly separates points.
As noted in the proof of \cite[Ch.~$4$, Corollary $8.5$]{ethier1986markov}, the pair $(f_\epsilon(\wtilde{S}_{\epsilon,t}), g_\epsilon(\wtilde{S}_{\epsilon,t}, \wtilde{V}_{\epsilon,t}))$, with $f_{\epsilon}, g_{\epsilon}$ as defined above, in general does not belong to the approximate generator defined in Equation $8.6$ of \cite[Ch.~$4$]{ethier1986markov} and one needs to consider instead the pair 
$$
\BK{
f_\epsilon(\wtilde{S}_{\epsilon,t}) + (t-\floor{\epsilon^{\gamma-2} \, t}/\epsilon^{\gamma-2}) \, g_\epsilon(\wtilde{S}_{\epsilon,t}, \wtilde{V}_{\epsilon,t}), 
\;
g_\epsilon(\wtilde{S}_{\epsilon,t}, \wtilde{V}_{\epsilon,t})
}$$
 to account for the fact that the process $\BK{X_\epsilon, Y_\epsilon }$ is a discrete time Markov chain (note here the typo in Equation (8.28) of \cite[Ch.~$4$]{ethier1986markov}; the correct term involves the quantity $t - \floor{\alpha_n \, t} / \alpha_n$). Since $(t-\floor{\epsilon^{\gamma-2} \, t}/\epsilon^{\gamma-2}) < \epsilon^{2 - \gamma}$, to prove Equations $(8.33)$ and $(8.34)$ of \cite[Ch.~$4$, Corollary 8.6]{ethier1986markov} we must show that, for some exponent $p>1$, 
 some $\epsilon_0>0$ and for $(X,U)\sim \pi$, we have
\begin{align*}
\sup_{\epsilon\in(0,\epsilon_0)} \;
\big\| \wtilde{\Gen}_\epsilon \phi(X,U)\big\|_p < \infty
\qquad \textrm{and} \qquad
\lim_{\epsilon \to 0} \, 
\epsilon^{2-\gamma}\cdot \EE\,\big[\sup_{t\in[0,T]}|  \wtilde{\mathcal{\Gen}}_{\epsilon}\phi(\wtilde{S}_{\epsilon,t}, \wtilde{V}_{\epsilon,t})\,|\,\big] = 0\ .
\end{align*}
These will imply that the sequence $\wtilde{S}_\epsilon$ is relatively weakly pre-compact in the Skorokhod topology. Note that we have exploited the fact that the algorithm is started at stationarity.
For the first result, we  choose $1<p<2$. Then,  the telescoping expansion \eqref{ridge.eq.telescop} 
yields that
\begin{align*}
\big\| \wtilde{\Gen}_\epsilon \phi(X,U)\big\|_p
&\leq
\tfrac{1}{\epsilon^{-\gamma}} \sum_{j=0}^{\floor{ \epsilon^{-\gamma}}-1} \norm{ 
\Gen_{\epsilon} \phi (X_{\epsilon,j},U_{\epsilon,j}) - \A \phi (X_{\epsilon,j}, U_{\epsilon,j}) }_p
+
\tfrac{1}{\epsilon^{-\gamma}} \sum_{j=0}^{\floor{ \epsilon^{-\gamma}}-1} \norm{
\A \phi (X_{\epsilon,j}, U_{\epsilon,j}) }_p\\
&
\lesssim 
\,\norm{ \Gen_{\epsilon} \phi (X,U) - \A \phi (X,U) }_p
+
\norm{ \A \phi (X,U)}_p \ . 
\end{align*}
In the second line we have exploited the fact that the RWM chain is started at stationarity. 
The required result follows immediately from the given upper bound in 
Proposition \ref{ridge.prop:limiting.generator} and Assumption \ref{ridge.assump.pi}.
%
%
%
For the second result,  since the process $t\mapsto \wtilde{\mathcal{\Gen}}_{\epsilon}\phi(\wtilde{S}_{\epsilon,t}, \wtilde{V}_{\epsilon,t})$ makes at most $\lfloor T/\epsilon^{2-\gamma} \rfloor$ jumps on the interval $t \in [0,T]$, it suffices to show that each jump is $o(1)$ in $L_1$-norm; equivalently, due to the stationarity assumption, one needs to prove that the expectation
%
%
$
\| \,\wtilde{\mathcal{\Gen}}_{\epsilon}\phi( {S}_{\epsilon,1}, {V}_{\epsilon,1})-\wtilde{\mathcal{\Gen}}_{\epsilon}\phi( {S}_{\epsilon,0}, {V}_{\epsilon,0})\, \|_1
$
converges to zero as $\epsilon \to 0$.
Adding and subtracting $\Gen\phi(S_{\epsilon,1})-\Gen\phi(S_{\epsilon,0})$, Equation \eqref{e.finite.dim.conv} and the stationarity assumption yield that
\begin{align*}
\big\| \,\wtilde{\mathcal{\Gen}}_{\epsilon}\phi( {S}_{\epsilon,1}, {V}_{\epsilon,1})-\wtilde{\mathcal{\Gen}}_{\epsilon}\phi( {S}_{\epsilon,0}, {V}_{\epsilon,0}) \big\|_1
\leq 
o(1) + \|\, {\mathcal{\Gen}}\phi( {S}_{\epsilon,1})- {\mathcal{\Gen}} \phi( {S}_{\epsilon,0} )\,\|_1\ . 
\end{align*}
Under Assumption \ref{ass:exist}, for a smooth and compactly test function $\phi$, the function $x \mapsto \Gen \phi(x)$ is $\mu$-Holderian so that it suffices to show that $\norm{{S}_{\epsilon,1}-{S}_{\epsilon,0}}_1$ converges to zero as $\epsilon \to 0$; this is immediate since
\begin{align*}
\norm{ S_{\epsilon,1}-S_{\epsilon,0}}_1 
\;\lesssim \;
\epsilon  \, \sum_{j=1}^{\lfloor \epsilon^{-\gamma} \rfloor} \norm{Z_{x,j}}_1
\;\lesssim\;
\epsilon^{1-\gamma}
\end{align*}
and $\gamma$ was chosen inside the interval $(0,\tfrac{1}{2})$.

%
\end{proof}

\noindent
Proposition \ref{prop.L1.conv.generator} thus allows  to prove that the sequence $\{ \wtilde{S}_{\epsilon,t} \}_{t \in [0,T]}$ converges weakly in the Skorokhod space to the diffusion \eqref{ridge.eq.limit.diff} by establishing that  limit \eqref{e.finite.dim.conv} holds. Proving this result spans the remainder of this section.
%
%
\begin{prop} \label{prop:pinned}
Let Assumptions \ref{ass:F}-\ref{ridge.assump.pi} be satisfied 
and $\phi \in \mathcal C$ be a test function. 
Then, the following limit holds
\begin{align}
\lim_{\epsilon \to 0}  
\EE_{\pi}\,\Big|\,
\wtilde{\Gen}_{\epsilon}& \phi(x,u) 
-
\tfrac{1}{\epsilon^{-\gamma}} \, \sum_{j=0}^{\floor{\epsilon^{-\gamma}}-1} \EE_{\epsilon,x,u}\,\big[\,\A\phi(x,U_{\epsilon,j}) \,\big]\,\Big|  = 0\ . 
\label{eq:pinned}
\end{align}
\end{prop}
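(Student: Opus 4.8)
The plan is to reduce \eqref{eq:pinned} to two comparisons by inserting an intermediate quantity. Starting from the telescoping identity \eqref{ridge.eq.telescop}, I would rewrite $\wtilde{\Gen}_\epsilon\phi(x,u)$ as the average $\epsilon^{\gamma}\sum_{j=0}^{\floorbk{\epsilon^{-\gamma}}-1}\EE_{\epsilon,x,u}[\Gen_\epsilon\phi(X_{\epsilon,j},U_{\epsilon,j})]$, so that the expression inside the $\EE_{\pi}\abs{\cdot}$ of \eqref{eq:pinned} becomes $\epsilon^{\gamma}\sum_j \EE_{\epsilon,x,u}[\Gen_\epsilon\phi(X_{\epsilon,j},U_{\epsilon,j})-\A\phi(x,U_{\epsilon,j})]$. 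Pushing the absolute value through the average and through the inner conditional expectation (triangle inequality and conditional Jensen), and then inserting the intermediate term $\A\phi(X_{\epsilon,j},U_{\epsilon,j})$, reduces the statement to showing that the two averages
\[
\mathrm{(I)}=\epsilon^{\gamma}\sum_{j}\EE_{\pi}\,\EE_{\epsilon,x,u}\,\abs{\Gen_\epsilon\phi(X_{\epsilon,j},U_{\epsilon,j})-\A\phi(X_{\epsilon,j},U_{\epsilon,j})}
\]
and
\[
\mathrm{(II)}=\epsilon^{\gamma}\sum_{j}\EE_{\pi}\,\EE_{\epsilon,x,u}\,\abs{\A\phi(X_{\epsilon,j},U_{\epsilon,j})-\A\phi(x,U_{\epsilon,j})}
\]
both vanish as $\epsilon\to 0$.

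Term $\mathrm{(I)}$ should follow directly from work already in hand. Since the chain is started in stationarity, each pair $(X_{\epsilon,j},U_{\epsilon,j})$ is marginally distributed as $\pi$, so every summand equals $\EE_{\pi}\abs{\Gen_\epsilon\phi(x,u)-\A\phi(x,u)}$; the pointwise bound of Proposition \ref{ridge.prop:limiting.generator} together with the finite first moments from Assumption \ref{ridge.assump.pi} give that this is $\lesssim\epsilon$, and averaging over $j$ preserves the bound, so $\mathrm{(I)}\lesssim\epsilon\to 0$.

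The crux is Term $\mathrm{(II)}$, which quantifies the error of freezing the first argument of $\A\phi$ at the starting point $x$ instead of letting it track $X_{\epsilon,j}$. The first ingredient I need is a regularity estimate for $\A\phi$ in its first slot. Differentiating \eqref{ridge.eq.def.A} and using that $F,F'$ are bounded and Lipschitz (Assumption \ref{ass:F}), that $\nabla A,\nabla B$ are Lipschitz with at most linear growth (Assumption \ref{ridge.assump.pi}), and that $\nabla\phi,\Delta\phi$ and their derivatives are bounded, I expect a modulus of continuity of the form
\[
\abs{\A\phi(x,u)-\A\phi(x',u)}\lesssim\norm{x-x'}\,\BK{1+\norm{x}+\norm{x'}+\norm{u}}.
\]
The delicate point is the \emph{linear} growth in $\norm{u}$, since it must later be absorbed by available moments. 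The key observation is that the second $x$-difference $B(x,u+\ell Z_y)-B(x,u)-B(x',u+\ell Z_y)+B(x',u)$ is controlled purely by the Lipschitz constant of $\nabla B$ and by $\ell\norm{Z_y}$, carrying no $\norm{u}$-dependence, so that $\norm{u}$ enters only through the size of $\nabla_x B(x,u+\ell Z_y)$ itself.

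The second ingredient is that on the intermediate time scale the $x$-coordinate barely moves: each accepted proposal displaces $x$ by $\ell\epsilon Z_x$, so Minkowski's inequality yields $\norm{X_{\epsilon,j}-x}_2\lesssim\epsilon\,j\lesssim\epsilon^{1-\gamma}$ for $j\le\floorbk{\epsilon^{-\gamma}}$. Combining the displayed Lipschitz estimate with this small-displacement bound via Cauchy--Schwarz, and using that $\norm{x},\norm{X_{\epsilon,j}},\norm{U_{\epsilon,j}}$ all have finite $L_2(\pi)$-norms (Assumption \ref{ridge.assump.pi} and stationarity), each summand of $\mathrm{(II)}$ is $\lesssim\epsilon\,j$, whence $\mathrm{(II)}\lesssim\epsilon^{\gamma}\sum_{j<\epsilon^{-\gamma}}\epsilon\,j\lesssim\epsilon^{1-\gamma}$, which tends to $0$ precisely because $\gamma\in(0,\tfrac12)$. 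I expect the main obstacle to be establishing the displayed Lipschitz bound for $\A\phi$ with the correct linear control in $\norm{u}$; once that estimate is secured, the small-displacement and stationarity arguments close the proof routinely.
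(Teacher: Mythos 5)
Your proposal is correct and follows essentially the same route as the paper: the same telescoping decomposition into the generator-approximation error (handled by Proposition \ref{ridge.prop:limiting.generator} and stationarity) and the frozen-first-argument error, with the latter controlled by exactly the Lipschitz estimate \eqref{eq:lip} for $\A\phi$ in its $x$-slot, the displacement bound $\|X_{\epsilon,j}-X_{\epsilon,0}\|\lesssim\epsilon\sum_{k\le j}\|Z_{x,k}\|$, Cauchy--Schwarz, and the second-moment assumption, yielding the same $\mathcal{O}(\epsilon^{1-\gamma})$ rate. No gaps.
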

\begin{proof}
Under Assumptions \ref{ass:F}-\ref{ridge.assump.pi}  (we require that $F, F'$ are bounded and Lipschitz, $\nabla A, \nabla B$ are Lipschitz), for a smooth and compactly supported test function $\phi$, one can verify that 
\begin{equation}
\label{eq:lip}
|\A\phi(x,u)-\A\phi(\overline{x},u)| \lesssim (1+\|x\|+\|\overline{x}\|+\|u\|)
\norm{x-\overline{x}}
\end{equation}
 for any $\overline{x}, x \in \RR^{n_x}$ and $u \in \RR^{n_y}$. 
We now make use of the  
  telescoping equation 
\eqref{ridge.eq.telescop} to get that
\begin{align}
\EE_{\pi}\,\Big|\,
\wtilde{\Gen}_{\epsilon}& \phi(x,u) 
-
\tfrac{1}{\epsilon^{-\gamma}} \, \sum_{j=0}^{\floor{\epsilon^{-\gamma}}-1} \EE_{\epsilon,x,u}\,\big[\,\A\phi(x,U_{\epsilon,j}) \,\big]\,\Big|  \le 
\nonumber \\
&\le 
\EE_{\pi}\,\big|\,\Gen_{\epsilon} \phi(x,u) - 
\A\phi(x,u) \,\big| +
\tfrac{1}{\epsilon^{-\gamma}} \, \sum_{j=0}^{\floor{\epsilon^{-\gamma}}-1} \EE\,\big|\,\A\phi(X_{\epsilon,j},U_{\epsilon,j})- \A\phi(X_{\epsilon,0},U_{\epsilon,j})\,\big|\ . 
\label{eq:ant}
\end{align}
We consider the last term.  Using (\ref{eq:lip}) together with Cauchy-Schwartz, and the RWM upper bound
$\|X_{\epsilon,j}-X_{\epsilon,0}\|\lesssim \epsilon \sum_{k=1}^{j}\|Z_{x,k}\|$,
we obtain:
\begin{equation*}
\EE\,\big|\,\A\phi(X_{\epsilon,j},U_{\epsilon,j})- \A\phi(X_{\epsilon,0},U_{\epsilon,j})\,\big| \lesssim \epsilon 
\sum_{k=1}^{j} \|Z_{x,k} \|_2 = \mathcal{O}(j\,\epsilon)\ . 
\end{equation*}
Thus, the last term in (\ref{eq:ant}) is $\mathcal{O}(\epsilon^{1-\gamma})$
and vanishes in the limit  since $\gamma<1$.
The proof is  complete.
\end{proof}

To treat the average term in (\ref{eq:pinned}), 
we will we need to introduce a new Markov process 
coupled with the original one $\{(X_{\epsilon,k},U_{\epsilon,k})\}_{k\ge 1}$, 
but with the $x$-coordinate pinned at its initial position.
To this end, note that the Markov chain $\{X_{\epsilon,j}, U_{\epsilon,j}\}_{j \geq 0}$ can be described as follows. The initial position is defined as $(X_{\epsilon,0}, U_{\epsilon,0})=(X,U)$ for some variables $(X,U) \sim \pi$. For a sequence $\{\xi_j\}_{j \geq 0}$ of i.i.d random variables uniformly distributed on $(0,1)$ and a sequence $\{(Z_{x,j}, Z_{u,j})\}_{j \geq 0}$ of i.i.d random variables distributed as $\Normal(0,I_{n})$ we then have
\begin{align*}
\left(\begin{array}{c} 
X_{\epsilon,j+1}-X_{\epsilon,j} \\ 
U_{\epsilon,j+1}-U_{\epsilon,j} 
\end{array}\right) 
\;=\;
\ell \, \left(\begin{array}{c} 
\epsilon \, Z_{x,j} \\ 
Z_{y,j} 
\end{array}\right) 
\times 
\indic \,\BK{ \,\xi_{j} \leq 
a(X_{\epsilon,j},U_{\epsilon,j},\epsilon\, Z_{x,j},Z_{y,j}) \,}
\end{align*}
where the accept-reject function $a(\cdot, \cdot, \cdot, \cdot)$ is defined in   \eqref{ridge.eq:accept}. The new Markov chain $\{X^\star_j, U^\star_j\}_{j \geq 0}$ is defined as follows. For the \emph{same} random variables $(X,U)$ and $\{\xi_j\}_{j \geq 0}$ and $\{(Z_{x,j}, Z_{y,j})\}_{j \geq 0}$, we set $(X^\star_{0}, U^\star_{0})=(X,U)$ and 
\begin{align*}
\left(\begin{array}{c} 
X^\star_{j+1}-X^\star_{j} \\ 
U^\star_{j+1}-U^\star_{j} 
\end{array}\right) 
\;=\;
\ell \, \left(\begin{array}{c} 
0 \\ 
Z_{y,j} 
\end{array}\right) 
\times 
\indic \,\BK{ \,\xi_{j} \leq 
a(X,U^\star_{j},0,Z_{y,j}) \,}\  .
\end{align*}
Critically, the $x$-coordinate of the new process $\{X^\star_j, U^\star_j\}_{j \geq 0}$ remains still and the process does not depend on the parameter $\epsilon$. Also, conditionally on $X=x$, the process $\{U^\star_{j}\}_{j \geq 0}$ is simply a RWM Markov chain with target distribution on $\RR^{n_y}$ proportional to $u \mapsto \exp\BK{ B(x, u)}$. Thus, it readily follows from the Ergodic Theorem for Markov chains that for any smooth and compactly supported  test function $\phi$ we have
\begin{align} \label{eq.dif.A.gen}
\lim_{\epsilon \to 0} \; \EE_{\pi}\,\Big|\,  \tfrac{1}{\epsilon^{-\gamma}} \, \sum_{j=0}^{\floor{\epsilon^{-\gamma}}-1} \EE_{x,u}\,[\, \A\phi(x,U^\star_{j})\,] - \int_{u \in \RR^{n_y}} \A \phi(x,u) \, e^{B(x,u)}\, du\,\Big| = 0\ . 
\end{align}
Furthermore, a routine calculation, whose details 
can be found in Section \ref{ridge.sec.proof.lem.averaging}, 
gives the following result.
\begin{prop}
\label{pr:ident}
For any $x \in \RR^{n_x}$ and any $\varphi\in \mathcal{C}$ we have
\begin{align} \label{eq.identity.A.gen}
\int \A\phi(x,u) \, e^{B(x,u)} \, du
= \Gen \phi(x)\ .
\end{align}
\end{prop}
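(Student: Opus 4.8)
The plan is to evaluate the left-hand side of \eqref{eq.identity.A.gen} by splitting $\A\phi$ into its gradient and Laplacian parts and integrating each against $e^{B(x,u)}\,du$. At the outset I would record that $\int_{\RR^{n_y}} e^{B(x,u)}\,du = 1$ for every $x$ (so that $e^{B(x,\cdot)}$ is a bona fide density, namely the stationary law of the pinned $u$-chain), which is what turns the double-integral manipulations below into genuine detailed-balance identities. The Laplacian part is then immediate: by the definition of $a_0$ in \eqref{ridge.eq.mean.acceptance}, integrating $\tfrac{\ell^2}{2}\,\EE_{x,u}[F(DB)]\,\Delta\phi(x)$ against $e^{B(x,u)}\,du$ gives exactly $\tfrac{\ell^2}{2}\,a_0(x,\ell)\,\Delta\phi(x)$, the diffusion part of $\Gen\phi$. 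Hence everything reduces to showing that the vector
\[
V(x) := \int_{\RR^{n_y}} \EE_{x,u}[\,F'(DB)\,\nabla_x\{A(x)+B(x,u+\ell Z_y)\}\,]\,e^{B(x,u)}\,du
\]
satisfies $\ell^2\,V(x) = \drift_{\pi_X,\sigma^2}(x) = \tfrac{\ell^2}{2}(\nabla a_0(x,\ell) + a_0(x,\ell)\,\nabla A(x))$, matching the drift part of $\Gen\phi$.

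The central device is to rewrite every $u$-integral as a double integral over $(u,u')$ with $u' = u+\ell Z_y$, a symmetric Gaussian proposal density $q(u'-u)$, weight $e^{B(x,u)}$, and acceptance weight $F(DB)$ or $F'(DB)$, where $DB = B(x,u')-B(x,u)$. Relabelling $u\leftrightarrow u'$ leaves $q$ invariant, sends $DB\mapsto -DB$, and turns $e^{B(x,u)}$ into $e^{B(x,u')} = e^{B(x,u)}e^{DB}$. Combined with the reversibility condition \eqref{ridge.eq.reversibility}, $e^{r}F(-r)=F(r)$, and its derivative $e^{r}F'(-r)=F(r)-F'(r)$, this reflection converts an integrand carrying $F'(DB)$ into one carrying $F(DB)-F'(DB)$. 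Applying it to $P(x) := \int \EE_{x,u}[F'(DB)]\,e^{B}\,du$ gives $P = a_0 - P$, hence $P = \tfrac12 a_0$; since $\nabla A(x)$ factors out of the expectation, this handles the $\nabla A$ contribution, yielding $P(x)\,\nabla A(x) = \tfrac12 a_0\,\nabla A$.

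For the remaining term $Q(x) := \int \EE_{x,u}[F'(DB)\,\nabla_x B(x,u+\ell Z_y)]\,e^{B}\,du$, the same reflection (the extra factor $\nabla_x B(x,u')$ becoming $\nabla_x B(x,u)$ after relabelling) gives $Q = M - R$, where $M(x) := \int \EE_{x,u}[F(DB)\,\nabla_x B(x,u)]\,e^B\,du$ and $R(x) := \int \EE_{x,u}[F'(DB)\,\nabla_x B(x,u)]\,e^B\,du$; that is, $Q+R = M$. Separately, differentiating the defining integral of $a_0$ under the integral sign—legitimate under Assumption \ref{ridge.assump.pi}, since $F,F'$ are bounded and $\nabla B$ is integrable against $e^B$ with finite second moments—gives $\nabla a_0 = (Q-R) + M$. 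Substituting $R = M-Q$ yields $\nabla a_0 = 2Q$, i.e. $Q = \tfrac12\nabla a_0$. Combining, $V = P\,\nabla A + Q = \tfrac12(a_0\,\nabla A + \nabla a_0)$, so $\ell^2 V = \drift_{\pi_X,\sigma^2}$ and the gradient part of $\int\A\phi\,e^B\,du$ equals $\drift_{\pi_X,\sigma^2}(x)\cdot\nabla\phi(x)$; adding the Laplacian part gives $\Gen\phi(x)$, establishing \eqref{eq.identity.A.gen}.

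I expect the only non-mechanical points to be the careful justification of the two reflection identities—keeping the signs and the $e^{DB}$ bookkeeping correct when relabelling $u\leftrightarrow u'$ and invoking the differentiated reversibility relation $e^{r}F'(-r)=F(r)-F'(r)$—together with the justification of differentiation under the integral sign for $\nabla a_0$. Everything else is routine algebra, consistent with the proposition being an identity whose proof is deferred to the appendix.
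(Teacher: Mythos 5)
Your proof is correct and follows essentially the same route as the paper's: both reduce the claim to the two identities $\int \EE_{x,u}[F'(DB)]\,e^{B}\,du=\tfrac12 a_0$ and $\int \EE_{x,u}[F'(DB)\,\nabla_x B(x,u+\ell Z_y)]\,e^{B}\,du=\tfrac12\nabla a_0$, established via differentiation of $a_0$ under the integral sign, the differentiated reversibility relation $F(r)=F'(r)+e^rF'(-r)$, and the Gaussian relabelling $u\leftrightarrow u'$. Your bookkeeping with $P,Q,M,R$ is just a reorganization of the paper's computation, so no further comment is needed.
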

There is one result remaining to prove  weak convergence
of $\widetilde{S}_{\epsilon}$ to 
the limiting diffusion.
\begin{prop}
Let Assumptions \ref{ass:F}-\ref{ridge.assump.pi} be satisfied 
and $\phi \in \mathcal C$ be a test function. 
Then, the following limit holds
\begin{align}
\lim_{\epsilon \to 0}  
\EE_{\pi}\,\Big|\,
\tfrac{1}{\epsilon^{-\gamma}} \, \sum_{j=0}^{\floor{\epsilon^{-\gamma}}-1} \EE_{\epsilon,x,u}\,\big[\,\A\phi(x,U_{\epsilon,j}) \,\big]
- \tfrac{1}{\epsilon^{-\gamma}} \, \sum_{j=0}^{\floor{\epsilon^{-\gamma}}-1} \EE_{x,u}\,\big[\, \A\phi(x,U^\star_{j})\,\big] \,
\Big|  = 0\ . 
\label{eq:final}
\end{align}
\end{prop}

\begin{proof}
It suffices to establish that, for $X_{0,\epsilon}=X\sim \pi_X$,
\begin{align*} 
\lim_{\epsilon \to 0} \; \frac{1}{\epsilon^{-\gamma}} \, \sum_{j=0}^{\floor{\epsilon^{-\gamma}}-1}  \norm{\, \A\phi(X,U^\star_{j})   - \A\phi(X,U_{\epsilon,j})\,}_1 = 0\ .
\end{align*}
Under Assumptions \ref{ass:F}-\ref{ridge.assump.pi} and due the fact that $\varphi\in \mathcal{C}$, we have that  $\abs{\A\phi(x,u)} \lesssim 1 + \norm{u}$ so that $|\A\phi(X,U^\star_{j}) - \A\phi(X,U_{\epsilon,j})|$ is bounded by a constant multiple of $\indic(U^\star_{j} \neq U_{\epsilon,j}) \times (1 + \|U^\star_{j}\| + \|U_{\epsilon, j}\|)$. 
Recall that both chains are started from $(X,U)\sim \pi$. 
Also, from stationarity, we have that $U_j^*$ has the same law as $U$.
By the Cauchy-Schwarz inequality, since $\EE\,\|\,U\,\|^2 < \infty$
from Assumption \ref{ridge.assump.pi},  the conclusion follows once it is proved that
\begin{align*} 
\lim_{\epsilon \to 0} \; \frac{1}{\epsilon^{-\gamma}} \, \sum_{j=0}^{\floor{\epsilon^{-\gamma}}-1} \P{U^\star_{j} \neq U_{\epsilon,j}}^{1/2} = 0\ .
\end{align*}
The definition of the coupling between $(\{ X^\star_{j}, U^\star_{j}) \}_{j \geq 0}$ and $(\{ X_{\epsilon,j}, U_{\epsilon,j}) \}_{j \geq 0}$ shows that $U^\star_{j} = U_{\epsilon,j}$ if, and only if,
$
\indic(\, \xi_k \leq a(X_{\epsilon,k},U_{\epsilon,k}, \epsilon \, Z_{x,k}, Z_{y,k})\,) 
=
\indic(\, \xi_k \leq a(X,U_{\epsilon,k}, 0, Z_{y,k})\,)
$
for all $0\leq k \leq j-1$. It readily follows that
\begin{align*} 
\P{ U^\star_{j} \neq U_{\epsilon,j} }
&=
\EE \,\Big[\, 1-\prod_{k=0}^{j-1} \big(\,1-\big|\, a(X_{\epsilon,k},U_{\epsilon,k}, \epsilon \, Z_{x,k}, Z_{y,k})
- 
a(X,U_{\epsilon,k}, 0, Z_{y,k})\,\big|\,\big)\, \Big] \\
&\leq 
\sum_{k=0}^{j-1} \EE\, \big|\,
a(X_{\epsilon,k},U_{\epsilon,k}, \epsilon \, Z_{x,k}, Z_{y,k})
- 
a(X,U_{\epsilon,k}, 0, Z_{y,k})\,\big|
\end{align*}
where we have made use of the inequality $1-\prod_{k=0}^{j-1}(1-a_k) \le 
\sum_{k=0}^{j-1}a_k$, for sequences $a_k\in[0,1]$.
Under Assumptions \ref{ass:F}-\ref{ridge.assump.pi} we have that the difference 
$\big|
a(X_{\epsilon,k},U_{\epsilon,k}, \epsilon \, Z_{x,k}, Z_{y,k})
- 
a(X,U_{\epsilon,k}, 0, Z_{y,k})\,\big|$ is less than a constant multiple of 
$\epsilon \,(1+\|X\|+\|X_{\epsilon_k}\|+\|Z_{\epsilon,k}\|)\times\|Z_{x,k}\| +
(1+\|X\|+\|X_{\epsilon_k}\|+\|U_{\epsilon,k}\|+\|Z_{\epsilon,k}\|)\times \|X_{\epsilon,k}-X\|$.
Notice also that due to the RWM  chain we have $\|X_{\epsilon,k}-X_{\epsilon,0}\|\lesssim \epsilon \sum_{l=0}^{k-1}\|Z_{x,l}\|$.
Bringing everything together, we have shown that
$$\EE\, |\,
a(X_{\epsilon,k},U_{\epsilon,k}, \epsilon \, Z_{x,k}, Z_{u,k})
- 
a(X,U_{\epsilon,k}, 0, Z_{u,k})\,| \lesssim k \, \epsilon\ . $$ 
Therefore $\mathbb{P}( U^\star_{j} \neq U_{\epsilon,j} )\lesssim j^2 \, \epsilon$. This implies that $$\epsilon^{\gamma} \, \sum_{j=0}^{\floor{\epsilon^{-\gamma}}-1} \P{U^\star_{j} \neq U_{\epsilon,j}}^{1/2} \lesssim \epsilon^{1/2 - \gamma}\ .$$ Since $\gamma \in (0, \tfrac{1}{2})$, the conclusion follows. 

\end{proof}

This ends the proof of Equation \eqref{e.finite.dim.conv}, thus by Proposition \ref{prop.L1.conv.generator}, one can conclude that of the sequence of  processes 
$\{ \wtilde{S}_{\epsilon,t} \}_{t \in [0,T]}$ 
converges weakly in the 
Skorokhod space $\skorokhod([0,T], \RR^{n_x})$ to the diffusion process \eqref{ridge.eq.limit.diff}.\\

\subsubsection{Discrepancy Between $\wtilde{S}_\epsilon$ and $\wtilde{X}_\epsilon$}
\label{ridge.sec.conv.X}
We have proven in Section \ref{sec.conv.wtildeS} that 
the sequence of processes 
$\{\wtilde{S}_{\epsilon,t}\}_{t \in [0,T]}$ converges weakly in $\skorokhod([0,T], \RR^{n_x})$ 
to the limiting diffusion \eqref{ridge.eq.limit.diff}. This also implies that the sequence of 
processes $\{\wtilde{X}_{\epsilon,t}\}_{t \in [0,T]}$ converges towards the same limiting diffusion 
if one can establish that these two processes are close to each other in  the sense that
\begin{align} \label{ridge.eq.X.close.to.S}
\lim_{\epsilon \to 0} \; 
\EE\,\big[\, 
\sup \big\{\, \|\,\wtilde{X}_{\epsilon,t} - \wtilde{S}_{\epsilon,t} \,\| : t \in [0,T] \,\big\} \,\big] = \; 0\ .
\end{align}
Since the process $\wtilde{S}_{\epsilon,t}$ is obtained from the sequence $\{X_{\epsilon,j}\}$ by subsampling at rate $\tilde{c}(\epsilon) \equiv \epsilon^{-\gamma}$, the triangle inequality yields that the supremum in \eqref{ridge.eq.X.close.to.S} is less than a constant multiple of 
\begin{align} 
\epsilon \times \sup \,\Big\{\, 
\sum_{j=1}^{\floor{\tilde{c}(\epsilon)}} \norm{Z_{i,j}}
\; : \;
1 \leq i \leq \floor{ T \, c(\epsilon) / \tilde{c}(\epsilon) }
\,\Big\}
\end{align}
for independent centred and standard Gaussian random variables  $\{ Z_{i,j} \}_{i,j \geq 0}$ in $\RR^{n_x}$. 
To show that the above quantity converges to $0$ in expectation one can for instance work as follows. We define $$R_{i,\epsilon}:=\epsilon \sum_{j=1}^{\floor{\tilde{c}(\epsilon)}} \norm{Z_{i,j}}\ .$$
Then, for any $\alpha>0$, Markov's inequality gives $\mathbb{P}(R_{i,\epsilon}^3>\alpha^3)\le \EE\,[\,R_{i,\epsilon}^3\,]/\alpha^3\le C \epsilon^{3-3\gamma}/\alpha^3$ for a constant $C>0$. We also define 
$R_\epsilon:= \sup\{R_{i,\epsilon}: 1\leq i \leq \lfloor T/\epsilon^{2-\gamma} \rfloor\} $. Simple calculations give
\begin{align}
\EE\,[\,R_\epsilon\,] &= \int_{0}^{\infty} \mathbb{P}(R_{\epsilon}>\alpha)\,d\alpha
=  \int_{0}^{\infty} \Big[\,1 - \big\{\,1-\mathbb{P}(R_{1,\epsilon}>\alpha)\big\}^{\lfloor T/\epsilon^{2-\gamma} \rfloor}\,\Big]\,d\alpha  \nonumber \\
&\leq 
 \int_{0}^{\infty} \Big[\,1 - \big\{\,1-\tfrac{C}{\alpha^3}\,\epsilon^{3-3\gamma}\big\}^{\lfloor T/\epsilon^{2-\gamma} \rfloor}\,\Big]\,d\alpha  = 
  \int_{0}^{\infty} \Big[\,1 -\big[\,\big\{\,1-\tfrac{C}{\alpha^3}\,\epsilon^{3-3\gamma}\big\}^{\epsilon^{-3+3\gamma}}\big]^{\delta(\epsilon)}\,\Big]\,d\alpha 
  \label{eq:max}
\end{align}
with $\delta(\epsilon) =\epsilon^{3-3\gamma}\cdot \lfloor T/\epsilon^{2-\gamma} \rfloor$ vanishing in the limit since $\gamma\in(0,\tfrac{1}{2})$.
Now, for a $\delta>0$, we have that for big enough $\alpha$ the quantity  $\big\{\,1-\tfrac{C}{\alpha^3}\,\epsilon^{3-3\gamma}\big\}^{\epsilon^{-3+3\gamma}}$ is lower bounded by $e^{-(C/\alpha^3)\,(1+\delta)}$. 
Using this bound in (\ref{eq:max}) and then calling upon the dominated convergence theorem proves that $\EE\,[\,R_{\epsilon}\,]\rightarrow 0$ as required.
%

%
%
%
\subsection{Proof of Theorem \ref{ridge.th:main.general.proposal}}
\label{ridge.sec.proof.gen.prop}

The proof is entirely similar to the proof of Theorem \ref{ridge.th:main}. We only describe the modifications necessary 
to deal with this more general setting. We define the quantities $S_{\epsilon}, \wtilde{S}_{\epsilon}, \Gen_{\epsilon} \phi, \wtilde{\Gen}_{\epsilon} \phi$
the same way as in the proof of Theorem \ref{ridge.th:main}. The acceptance probability of the move $(X,U) \to (X + \ell \, \epsilon \, Z_x,U + \ell \, Z_y)$ 
reads
\begin{align*}
F \circ \log \, \BK{ \tfrac{\pi_{\epsilon}(X',U') \, p_{\epsilon}( (X',U') \to (X,U))}{\pi_{\epsilon}(X,U) \, p_{\epsilon}( (X,U) \to (X',U')) } }
\end{align*}
where $p_{\epsilon}[ (X,U) \to (X',U')]$ is the likelihood of the move $(X,U) \to (X'U')$. Proposition \ref{ridge.prop:limiting.generator} 
still holds but the limiting quantity $\A\phi(x,u) = \lim_{\epsilon \to 0} \, \Gen_{\epsilon} \phi(x,u)$ is now defined as
\begin{align*}
\A\phi(x,u)  =  \EE\,\big[\, F'(D B) \times\big\{ \ell^2(x)& \nabla_x ( A(x) + B(x,u + \ell Z_y))  + \nabla_x \ell^2(x) \big\} \,\big]
\cdot \nabla \phi(x) \\
&+ \tfrac{1}{2}\,\ell^2(x) \, \EE\,[\,F(D B)\,] \, \Delta \phi(x)\ .
\end{align*}
The proof uses a Taylor expansion of $\Gen_{\epsilon} \phi(x,u)$ with Assumptions \ref{ridge.assump.pi}-\ref{ridge.assump.ell} invoked to give a control on the error terms. 
Under boundedness assumptions on the function $x \mapsto \ell(x)$ the coupling used in the last part of the proof of Theorem \ref{ridge.th:main} is still valid and the rest of the proof then follows exactly the same lines as the proof of Theorem \ref{ridge.th:main}.

\section{Vanishing Acceptance Probability}
\label{sec.vanishing.acceptance}
We now consider the scenario where the target distribution $\pi_\epsilon$ is explored by a RWM algorithm that employs jump proposal of size $\OO(1)$; in other words and with the notations of the previous section: $h(\epsilon) =  1$. 
\subsection{Markov Jump Process Limit}
At an heuristic level, as $\epsilon \to 0$, a proposal $(X,Y) \mapsto (X',Y')$ is accepted only if $\norm{Y'}$ is of order $\epsilon$, which happens with probability $\OO(\epsilon^{n_y})$. In order to obtain a non-trivial limiting object, one thus needs to accelerate time by a factor $\epsilon^{n_y}$ and in this case the rescaled RWM trajectories converge, as $\epsilon \to 0$, to a Markov jump process limit.
In particular, we now consider the process $t \mapsto ( \wtilde{X}_{\epsilon,t}, \wtilde{U}_{\epsilon,t})$ defined as
\begin{equation}
\label{eq:new}
( \wtilde{X}_{\epsilon,t}, \wtilde{U}_{\epsilon,t})
\;=\;
(X_{\epsilon, \floor{ t \cdot \epsilon^{-n_y} }},
U_{\epsilon, \floor{ t\cdot \epsilon^{-n_y}} } )
\end{equation}
where, as in the previous section, we have used the rescaled coordinate $U_\epsilon \equiv Y_\epsilon / \epsilon$. For a smooth and compactly supported test function $\phi: \RR^{n_x} \times \RR^{n_y} \to \RR$, the generator of the process $t \mapsto (\wtilde{X}_{\epsilon,t}, \wtilde{U}_{\epsilon,t})$ reads
\begin{align*}
\GG_\epsilon &\phi(x,u)
=
\epsilon^{-n_y} \, \EE_{x,u}\,\big[\,\big(\,\phi(x+\ell\,Z_x, u + \ell\, \epsilon^{-1} Z_y) - \phi(x,u)\,\big) \, a(x,u, Z_x,  \, \epsilon^{-1} \, Z_y)\,\big]
\\
&=
\int_{\RR^{n}} \BK{\phi(\overline{x},\overline{u})-\phi(x,u)} \, 
Q(x,u,\overline{x},\overline{u}) \,
\exp\big\{-\epsilon^2 \, \norm{\overline{u}-u}^2/(2 \ell^2)\,\big\}\, d(\overline{x} , \overline{u})\ ,
\end{align*}
with $(X',U')=(x+\ell\,Z_x, u + \ell\, \epsilon^{-1} Z_y)$ where $(Z_x, Z_y)$ is a standard Gaussian random variable on $\RR^{n_x} \times \RR^{n_y}$ and the function $Q(\cdot, \cdot, \cdot, \cdot)$ reads
\begin{align*}
Q(x,u,\overline{x},\overline{u})
&=
\frac{F\big(\,A(\overline{x})-A(x)+B(\overline{x},\overline{u})-B(x,u)\,\big) \, 
\exp\big\{-\norm{\overline{x}-x}^2/(2\ell^2)\,\big\}}{(2\pi\ell^2)^{n/2}}\ .
\end{align*}
The next theorem shows that the sequence $t \mapsto (\wtilde{X}_{\epsilon,t}, \wtilde{U}_{\epsilon,t})$ converges to a Markov jump process $t \mapsto (\wtilde{X}_{t}, \wtilde{U}_{t})$ with transition kernel $K(x,u,\overline{x},\overline{u})=Q(x,u,\overline{x},\overline{u})/r(x,u)$ and jump rate   function $r(x,u) = \int Q(x,u,\overline{x},\overline{u}) \, d(\overline{x},\overline{u})$. Note that
$Q(x,u,\overline{x},\overline{u})\le \pi(\overline{x}, \overline{u})/\{\pi(x,u)(2\pi \ell^2)^{n/2}\}$, thus $r(x,u)<\infty$ and the limiting jump process is well-defined.
The generator $\GG$ of this jump process reads
\begin{align}
\label{eq:limJP}
\GG \phi(x,u)
=
\int_{\RR^{n}} \BK{\phi(\overline{x},\overline{u})-\phi(x,u)} \, 
Q(x,u,\overline{x},\overline{u}) \,
d(\overline{x} , \overline{u})\ .
\end{align}
Informally, the Markov process $t \mapsto (\wtilde{X}_{t}, \wtilde{U}_{t})$ can be described as follows; when found at $(x,u)$, the process waits an exponential time with parameter $r(x,u)$ before jumping to a new position $(\overline{x}, \overline{u})$ whose density is given by $K(x,u,\overline{x}, \overline{u})$. Using an approach similar to the one of the previous section, one can prove the following result. The homogenization argument of the previous section is not necessary since the $u$-coordinate does not need to be averaged out; the proof is much simpler.
We do not require strong differentiability conditions, 
the following assumption will suffice.
\begin{assumptions}
\label{ass:cont}
Functions $A:\RR^{n_x}\mapsto \RR$ and $B:\RR^{n_x+n_y}\mapsto \RR$ 
are continuous.
\end{assumptions}

\begin{theorem}
\label{th:jump}
Assume the that the process $(\wtilde{X}_{\epsilon}, \wtilde{U}_{\epsilon})$ is started at time $0$ from the equilibrium distribution $\pi$. 
Under Assumption \ref{ass:cont},
for a finite horizon $T>0$, the sequence of processes $t \mapsto (\wtilde{X}_{\epsilon,t}, \wtilde{U}_{\epsilon,t})$ converges weakly in the Skorokhod space $\skorokhod([0,T], \RR^{n})$  to the time-homogeneous   jump-process $t \mapsto (\wtilde{X}_t, \wtilde{U}_t)$ with generator $\GG$ in (\ref{eq:limJP}).
\end{theorem}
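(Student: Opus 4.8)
The plan is to reprise the Ethier--Kurtz generator-convergence strategy of Theorem~\ref{ridge.th:main}, while exploiting the decisive simplification that, with proposals of size $\OO(1)$, the accelerated pair $(\wtilde{X}_{\epsilon,t},\wtilde{U}_{\epsilon,t})$ is \emph{itself} a time-homogeneous Markov process on $\RR^{n}$ with explicit generator $\GG_\epsilon$. No intermediate time scale, sub-sampling, or averaging of a hidden coordinate is needed: it suffices to prove that $\GG_\epsilon$ converges to $\GG$ on the class $\mathcal{C}$ of smooth compactly supported test functions $\phi:\RR^{n}\to\RR$ and to invoke \cite[Ch.~4, Theorem~8.2 and Corollaries~8.5--8.6]{ethier1986markov} directly with $f_\epsilon=\phi$ and $g_\epsilon=\GG_\epsilon\phi$. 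I would organise the work as: (i) well-posedness of the limiting jump process; (ii) convergence of the generators together with a uniform bound; (iii) finite-dimensional convergence; and (iv) relative compactness.

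For step (ii), observe that $\GG_\epsilon$ and $\GG$ differ only through the Gaussian factor, so that for any $\phi\in\mathcal{C}$,
\begin{align*}
\GG_\epsilon\phi(x,u)-\GG\phi(x,u)
= \int_{\RR^{n}}\big(\phi(\overline{x},\overline{u})-\phi(x,u)\big)\,Q(x,u,\overline{x},\overline{u})\,\Big[e^{-\epsilon^2\|\overline{u}-u\|^2/(2\ell^2)}-1\Big]\,d(\overline{x},\overline{u}).
\end{align*}
The integrand is dominated in modulus by $2\|\phi\|_\infty\,Q(x,u,\cdot,\cdot)$, integrable with total mass $r(x,u)<\infty$, while the bracket is bounded by $1$ and tends pointwise to $0$; dominated convergence gives $\GG_\epsilon\phi(x,u)\to\GG\phi(x,u)$ for every $(x,u)$. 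Moreover $\GG_\epsilon\phi$ is bounded \emph{uniformly in $\epsilon$}: for $(x,u)\notin\mathrm{supp}\,\phi$ one has $|\GG_\epsilon\phi(x,u)|\le\|\phi\|_\infty\int_{\mathrm{supp}\,\phi}(2\pi\ell^2)^{-n/2}e^{-\|\overline{x}-x\|^2/(2\ell^2)}\,d(\overline{x},\overline{u})$, which is controlled by the $\overline{u}$-volume of the support times a Gaussian mass $\le 1$; on the compact set $\mathrm{supp}\,\phi$ one uses $|\GG_\epsilon\phi|\le 2\|\phi\|_\infty\,r(x,u)\le 2\|\phi\|_\infty/\{\pi(x,u)(2\pi\ell^2)^{n/2}\}$, bounded since the continuous positive density $\pi$ is bounded below on compacta (Assumption~\ref{ass:cont}). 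Hence $\sup_\epsilon\|\GG_\epsilon\phi\|_\infty\le C(\phi)<\infty$, and dominated convergence upgrades pointwise convergence to $\EE_\pi|\GG_\epsilon\phi-\GG\phi|\to 0$.

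These two facts render steps (iii) and (iv) routine. Because the chain starts and stays at $\pi$ (reversibility of RWM w.r.t.\ $\pi_\epsilon$ transfers to $\pi$ in the rescaled coordinates), the first two requirements of \cite[Ch.~4, Corollary~8.5]{ethier1986markov} reduce to $\EE_\pi|\phi|\le\|\phi\|_\infty<\infty$ and $\EE_\pi|\GG_\epsilon\phi|\le C(\phi)<\infty$, while the convergence requirement follows from $|\EE[(\GG_\epsilon\phi-\GG\phi)(\wtilde{X}_{\epsilon,t},\wtilde{U}_{\epsilon,t})\prod_i h_i]|\le(\prod_i\|h_i\|_\infty)\,\EE_\pi|\GG_\epsilon\phi-\GG\phi|\to 0$; this yields convergence of the finite-dimensional distributions. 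For relative compactness I would invoke \cite[Ch.~4, Corollary~8.6]{ethier1986markov}: compact containment is immediate from stationarity at the probability measure $\pi$, and the remaining jump-size condition amounts to $\epsilon^{n_y}\,\EE[\sup_{t\le T}|\GG_\epsilon\phi|]\le C(\phi)\,\epsilon^{n_y}\to 0$ (the time between successive steps being $\epsilon^{n_y}$), which holds since $n_y\ge1$ and $\|\GG_\epsilon\phi\|_\infty$ is uniformly bounded. Together with (iii) this establishes weak convergence in $\skorokhod([0,T],\RR^{n})$.

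The one step demanding genuine care is (i): verifying that the closure of $\{(\phi,\GG\phi):\phi\in\mathcal{C}\}$ generates a Feller semigroup, as required to apply \cite[Ch.~4, Theorem~8.2]{ethier1986markov}. The bound $Q(x,u,\overline{x},\overline{u})\le\pi(\overline{x},\overline{u})/\{\pi(x,u)(2\pi\ell^2)^{n/2}\}$ already guarantees $r(x,u)<\infty$ pointwise, so $K=Q/r$ is a bona fide jump kernel and the minimal process is well-defined; the expected obstacle is that $r$ may be unbounded where $\pi$ is small, so non-explosion over $[0,T]$ is not automatic. I would resolve this using reversibility: the detailed-balance relation $e^{r}F(-r)=F(r)$ makes $\pi(x,u)\,Q(x,u,\overline{x},\overline{u})$ symmetric, whence $\pi$ is reversible and invariant for $\GG$; since $\pi$ is a \emph{probability} measure, the minimal semigroup started from $\pi$ loses no mass (i.e.\ $P^t\mathbf{1}=1$ holds $\pi$-a.e.), forcing non-explosion from $\pi$-almost every starting point and giving the required well-posedness under the continuity Assumption~\ref{ass:cont}. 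This is the only place where the soft structure of the problem, rather than an explicit estimate, carries the argument.
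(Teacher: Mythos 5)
Your proposal is correct and follows essentially the same route as the paper: direct generator convergence $\EE_\pi|\GG_\epsilon\phi-\GG\phi|\to 0$ via dominated convergence (exploiting the compact support of $\phi$ and the bound $F\le 1$), followed by the Ethier--Kurtz machinery for finite-dimensional convergence and relative compactness, with no intermediate time scale or averaging needed. The only differences are cosmetic strengthenings: you establish a uniform sup-norm bound on $\GG_\epsilon\phi$ where the paper settles for an $L_p(\pi)$ bound with $p>1$, and you explicitly justify non-explosion of the limiting jump process via reversibility of $\pi$, a point the paper asserts from $r(x,u)<\infty$ without further comment.
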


%
%
\begin{proof}
The proof is very similar to the one of Theorem \ref{ridge.th:main},
thus relies on showing convergence of the approximate generator $\GG_{\epsilon}$ to the limiting one $\GG$ in the sense of  
Proposition  \ref{prop.L1.conv.generator}. For the reader's convenience, we will thus only highlight  the differences with the proof of Theorem \ref{ridge.th:main}. As a first step, we prove that the finite dimensional distributions of $(\wtilde{X}_\epsilon, \wtilde{U}_\epsilon)$ converges to those of $(\wtilde{X}, \wtilde{U})$. The proof is then concluded by proving that the sequence $(\wtilde{X}_\epsilon, \wtilde{U}_\epsilon)$ is weakly pre-compact in the appropriate topology. \\[0.3cm]
\noindent
{\it Convergence of the finite dimensional distributions of $(\wtilde{X}_\epsilon, \wtilde{U}_\epsilon)$.}\\
As in the proof of Proposition \ref{prop.L1.conv.generator}, since the algorithm is assumed to start at stationarity, this reduces to proving that
\begin{align} \label{eq.l1.conv.jump.process}
\lim_{\epsilon \to 0} \;
\EE_{\pi}\,\big|\,\GG_\epsilon \phi(x,u) - \GG \phi(x,u) \,\big| \; = \; 0
\end{align}
for any smooth and compactly supported test function $\phi: \RR^{n_x} \times \RR^{n_y} \to \RR$. Recall also that we have the bound 
$F(r)\le F_{MH}(r)= 1\wedge e^{r}$, $r\in\mathbb{R}$.
We have
\begin{align*}
\EE_{\pi}\,\big|\,\GG_\epsilon \phi(x,u) - \GG \phi(x,u)\,\big|
&\leq 
\int \abs{D \phi} \, 
\BK{\pi(x,u) \wedge \pi(\bar{x}, \bar{u})} \, \BK{1-e^{-\frac{(\bar{u}-u)^2}{2 \ell^2}}} \, e^{-\frac{(\bar{x}-x)^2}{2 \ell^2}} \, \frac{d(x,u,\bar{x},\bar{u})}{(2 \pi\ell^2)^{n/2}}\\
&\lesssim
\int_{\overline{\Omega}} 
\min\BK{\pi(x,u), \pi(\bar{x}, \bar{u})} \, \BK{1-e^{-\frac{(\bar{u}-u)^2}{2 \ell^2}}} \, e^{-\frac{(\bar{x}-x)^2}{2 \ell^2}} \, d(x,u,\bar{x},\bar{u})
\end{align*}
where $D \phi \equiv \phi(\bar{x}, \bar{u}) - \phi(x,u)$; we have used the fact that since $\phi$ is smooth with compact support $\Omega$ the norm $\norm{D \phi}_{\infty}$ is finite and $D \phi$ is zero outside of $\overline{\Omega} = \BK{ \RR^n \times \Omega} \cup \BK{ \Omega \times \RR^n }$. Notice that 
\begin{align*}
\int_{\overline{\Omega}}  \min\BK{\pi(x,u), \pi(\bar{x}, \bar{u})} \, d(x,u,\bar{x},\bar{u}) \; \leq  \; 
2\int_{\Omega}dx du\,\int_{\mathbb{R}^{n}}\pi(x,u)dx du<\infty\  ,
\end{align*}
so Equation \eqref{eq.l1.conv.jump.process} follows from the dominated convergence theorem.\\

\noindent
{\it Relative weak pre-compactness of $(\wtilde{X}_\epsilon, \wtilde{U}_\epsilon)$}.\\
As in the proof of Proposition \ref{prop.L1.conv.generator},
we follow \cite[Ch. $4$, Corollary $8.6$]{ethier1986markov}; let us focus on proving that there exists an exponent $p>1$ such that
\begin{align} \label{eq.markov.jump.precompact}
\sup_{\epsilon>0} \;
\|\, \GG_\epsilon \phi(X,U)\|_p < \infty\ .
\end{align}
Since the density $\pi(x,y) = e^{A(x) + B(x,y)}$ is strictly positive and continuous, we have
\begin{align*}
\norm{ \GG_\epsilon \phi(X,U) }^p_p
&\leq \tfrac{1}{(2\pi \ell^2)^{np/2}}
\int \Big\{\,\int \, \abs{D \phi}
\min\big( \tfrac{1}{\pi(\overline{x},\overline{u})},\tfrac{1}{\pi(x,u)} \big) \, \pi(\overline{x},\overline{u}) \, d(\overline{x},\overline{u})\,\Big\}^p \, \pi(x,u) \, d(x,u) \\
&\leq
 \tfrac{1}{(2\pi \ell^2)^{np/2}} 2^p \, \norm{\phi}_\infty^p
\int_{\overline{\Omega}}
\min\big( \tfrac{1}{\pi(\overline{x},\overline{u})},\tfrac{1}{\pi(x,u)} \big)^{p} \, \pi(\overline{x},\overline{u}) \, \pi(x,u) \, d(x,u,\overline{x},\overline{u})\\[0.1cm]
&\leq \tfrac{1}{(2\pi \ell^2)^{np/2}} 
2^p \, \norm{\phi}_\infty^p \,
\sup \big\{\, \min\big( \tfrac{1}{\pi(\overline{x},\overline{u})},\tfrac{1}{\pi(x,u)} \big)^{p}
\, : \, (x,u,\overline{x},\overline{u}) \in \overline{\Omega}\, \big\} \\[0.1cm]
&=
 \tfrac{1}{(2\pi \ell^2)^{np/2}} 2^p \, \norm{\phi}_\infty^p 
\sup\curBK{ \pi(x,u)^{-p} \, : \, (x,u) \in \Omega }
 < \infty\ ,
\end{align*}
which proves Equation \eqref{eq.markov.jump.precompact}.
\end{proof}

\noindent
Since time has to be accelerate by a factor $\epsilon^{-n_y}$ in order to observe a non-trivial limit, the above theorem shows that the algorithmic complexity of RWM algorithm with jump proposal of order $\OO(1)$, when used to explore the distribution $\pi_\epsilon$, scales as $\OO(\epsilon^{-n_y})$.
Note nevertheless that it is not straightforward to optimize the free parameter $\ell>0$ since the limiting Markov jump processes obtained from different values of $\ell$ are generally not related by a simple linear change of time, as were the case for example in the original article \cite{roberts1997weak}.

\subsection{Jump process versus diffusion limit}
When using the RWM algorithm to explore $\pi_\epsilon$, the scaling limit Theorems \ref{ridge.th:main.general.proposal} and \ref{th:jump} reveal that in the case where the dimension of the weak identifiability equals one, i.e. $n_y=1$, it is asymptotically more efficient (as $\epsilon \to 0$) to use jump proposals of size $\OO(1)$, and thus using an algorithm with vanishing acceptance probability behaving like a Markov jump process, than adopting jump proposals of size $\OO(\epsilon)$ which leads to an acceptance rate bounded away from zero and one; indeed, it has been proved that in this case, jumps sizes of order $\OO(1)$ lead to an algorithm whose complexity scales as $\OO(\epsilon^{-1})$ whereas jump sizes of order $\OO(\epsilon)$ yield to a complexity that scales as $\OO(\epsilon^{-2})$. The standard rule-of-thumb that advocates tuning the mean acceptance probability of the random walk algorithm to a given optimal value $a_\star \in (0,1)$, for example equals to $a_\star = 0.234$ in the high-dimensional setting of \cite{roberts1997weak}, does not generally apply in the setting investigated in this article.\\

In the case where $n_y=2$, a similar argument shows that the approaches consisting in setting a jump size of $\OO(1)$ or $\OO(\epsilon)$ both yield to algorithms whose complexity scales as $\OO(\epsilon^{-2})$. In the case where the dimensionality of the weak identifiability is large, i.e. $n_y \geq 3$, it is preferable to adopt jump proposal sizes of order $\OO(\epsilon)$; the resulting algorithm scales as $\OO(\epsilon^{-2})$ whereas the choice of jump sizes $\OO(1)$ scales as $\OO(\epsilon^{-n_y})$.

\section{Towards a Diffusion Limit for General Manifolds}
\label{sec:man}
We give here a conjecture for a diffusion limit for the case 
of a non-linear manifold. An analytic
 proof is left for future work. 
The presentation will sidestep technicalities and will also serve 
to highlight the main building blocks of the earlier proof of Theorem \ref{ridge.th:main}.

We define an $n_x$-dimensional manifold by assuming existence 
of an invertible  global co-ordinate chart $r:\RR^{n_x}\mapsto \RR^{n_x+n_y}$,
so we have:
\begin{equation*}
\m = \big\{ \, v \in \RR^{n_x+n_y}\, :\, v = r(x)\ , \,x\in \RR^{n_x}\, \big\}\ .
\end{equation*}
We denote by $T_v \m$ the tangent space of $\m$ at $v=r(x)$. 
The plane $T_v \m$ is $n_x$-dimensional
with a canonical basis  comprised of the linearly independent 
vectors $\{ (\partial r /\partial x_i)(x)\}_{i=1}^{n_x}$.
The mapping $r$ gives rise to the metric tensor:
$$G(x)=\Big(\big\langle (\partial r/\partial{x_i})(x), 
(\partial r/\partial {x_j})(x) \big\rangle \Big)_{i,j=1}^{n_x}\in \RR^{n_x\times n_x}\ .$$
We will use the standard decomposition in terms of the tangent 
and its perpendicular normal space $N_{v}\m$ defined via  the linear system:
\begin{equation}
\label{eq:perp}
N_v \m = \{w\in \RR^{n_x+n_y}: \langle w, (\partial r/\partial x_i)(x)
\rangle=0\ , \,1\le i\le n_x \}\ .
\end{equation}
That is, we have  
%
$\RR^{n_x+n_y} = T_v\m\, \oplus\, N_v\m$. 
Let $(q_i(v))_{i=1}^{n_y}$ denote an orthonormal basis for $N_{v}\m$.
This could for instance be generated after applying Gram-Schmidt iteration
on the solutions of the linear system in \eqref{eq:perp}.
Some care is needed in the basis construction to ensure smoothness of $v\mapsto q_i(v)$,
$1\le i\le n_y$.
%
For $w\in N_v\m$ we denote by 
$Q_v w$ the ordered co-ordinates of $w$ w.r.t.\@ the basis $(q_i(v))_{i=1}^{n_y}$. 
We assume well-posedness of the projection
%
$\pr : \RR^{n_x+n_y} \mapsto \m$,
%
mapping each element of $\RR^{n_x+n_y}$ to its closest on $\m$ 
defined   as:
\begin{equation*}
\pr(w) = r\circ\big\{ 
 \arg\!\!\min_{x\in \RR^{n_x}}   \|r(x)-w\|^2 \big\} \  .
\end{equation*}
We will need the derivatives for the projection 
mapping. For a mapping $H:\RR^{k}\mapsto \RR^{l}$,
we denote $DH(x) = (\partial H_i/\partial x_j )_{1\le i\le l,1\le j\le k}$. 
We have:
\begin{align}
D\,(r^{-1}(\pr))(v) &= 
G^{-1}(x)\,\{Dr(x)\}^{\top}\ , \label{eq:T1}\\
D \pr(v) &= Dr(x)\,G^{-1}(x)\,\{Dr(x)\}^{\top}\ .  \label{eq:T2}
\end{align}
The above can be found by standard Taylor expansion of 
the distance metric $\|r(x)-w\|^2$ around its maximiser, akin 
to the procedure used for proving a CLT for the MLE, see for instance Chapter 18 of
\cite{ferg:96}.
%
For $w\in \RR^{n_x+n_y}$, a natural decomposition  to be used in this 
set-up is:
\begin{equation*}
w =  (x, y) \equiv \big(\,r^{-1}(\pr (w))\,,\, Q_v(w -\pr (w))\, \big)\ ;\quad x\in\RR^{n_x},\,y\in \RR^{n_y}
\ . 
\end{equation*}
The target distribution is assumed to be of the form:
\begin{align}
\pi_\epsilon(dw) = \pi_\epsilon(dx,dy) 
 = \tfrac{1}{\epsilon^{n_y}} e^{A(x) + B(x,y/\epsilon)} dx dy\ . 
\end{align}
To standarize we set $u=y/\epsilon$.
As in \eqref{ridge.eq:proposal}, we consider the standard RWM on $\RR^{n_x+n_y}$ 
with target $\pi_{\epsilon}(x,y)$ and proposal: 
\begin{equation*}
w' = w + \ell\,\epsilon\,Z\ , \quad Z\sim N(0,I_{n_x+n_y})\ . 
\end{equation*}
These dynamics give rise to the RWM trajectory $\{(X_{\epsilon,k},Y_{\epsilon,k})\}_{k\ge 0}$,
and the standardised trajectory $U_{\epsilon,k}=Y_{\epsilon,k}/\epsilon$.
We have that $w'=(x',y')$ with $x' = r^{-1}(\pr(w'))$.
A straightforward Taylor expansion using \eqref{eq:T1}-\eqref{eq:T2} will give:
%
%
%
\begin{align*}
x'&= x + \ell\,\epsilon\,J(x)\, Z +
\mathcal{O}(\epsilon^2)\ ;\\
u' &= u + \ell\,K(x)\,Z + \mathcal{O}(\epsilon)\ , 
\end{align*}
where we have defined:
\begin{align*}
J(x) &= G^{-1}(x)\,\{Dr(x)\}^{\top}\ , \\
K(x) &= Q(v)^{\top}\big(I-Dr(x)G^{-1}(x)\{Dr(x)\}^\top
\big)\  .
\end{align*}
for $Q(v)=[q_1(v),\ldots, q_{n_y}(v)]$.
%
%
%
We thus get that: 
%
%
\begin{align*}
\phi(x')-\phi(x) =  \big( \ell\,&\epsilon\,J(x) Z+ \mathcal{O}(\epsilon^2)\big)\,
(\nabla\phi(x))^{\top} \\ 
 &+ \tfrac{1}{2}\,\ell^2\,\epsilon^2\, \big\langle J(x)Z\,,\, \nabla^2\phi(x)\,J(x)Z\,\big\rangle 
 + \mathcal{O}(\epsilon^3)\ . 
\end{align*}
We now turn our attention to the acceptance probability term, and we have:
\begin{align*}
F\big(  A(x') -& A(x) + B(x',u')- B(x,u)   \big) =\\ &= F(B(x,u+\ell\,K(x)Z)-B(x,u)  )  + \mathcal{O}(\epsilon)\ . 
\end{align*}
Similarly to \eqref{ridge.eq.mean.acceptance}, 
we define the limiting average acceptance probability at position $x$:
\begin{align} \label{eq:average}
a_0(x, \ell) 
= \int_{\RR^{n_y}} \EE \Big[ F\Big( B(x,u + \ell\,K(x)Z )-B(x,u)\Big) \Big] \, e^{B(x,u)} \, du  \  .
\end{align}
Following the crux of the analytical proof for the case of affine manifold,
we start by looking at the one-step generator:
\begin{align} \label{eq:1a}
\Gen_{\epsilon} \phi (x,u)  &= \EE\,\Big[\, 
\frac{\phi(X_{\epsilon,1})-\phi(X_{\epsilon,0})}{\epsilon^{2}} 
\, \big| X_{\epsilon,0}=x, U_{\epsilon,0}=u\Big] \nonumber \\
& =   \EE_{x,u}\Big[\, 
\frac{\phi(x')-\phi(x)}{\epsilon^{2}}\cdot F (  A(x') -  A(x) + B(x',u')- B(x,u) )   \,\Big] 
\nonumber
\\[0.2cm]
&= \EE_{x,u}\,\big[\,\big( \ell\, \epsilon^{-1}\,J(x) Z+ \mathcal{O}(1)\big)\,
 F(B(x,u+\ell\,K(x)Z)-B(x,u)  )\,\big]\cdot 
(\nabla \phi(x))^{\top} \nonumber  \\[0.1cm] & \qquad \quad + \EE_{x,u}\,\big[\,\tfrac{1}{2}\ell^2  \big\langle J(x)Z\,,\, \nabla^2\phi(x)\,
J(x)Z\,\big\rangle\cdot F(B(x,u+\ell\,K(x)\,Z)-B(x,u)  )\,\big] \nonumber \\[0.1cm]
& \qquad \quad\qquad \quad \qquad \quad + \mathcal{O}(\epsilon)\ .  
\end{align}
Notice that due to the orthogonality of $T_v\m, N_v \m$, 
we have that:
\begin{equation*}
K(x)^{\top}J(x)=0\ ,
\end{equation*}
which implies the independence $J(x)Z\perp K(x)Z$. 
Thus, continuing from \eqref{eq:1a} we have that:
\begin{align}
\Gen_{\epsilon} \phi (x,u) &= \langle \mathcal{O}(1), 
\nabla \phi(x)\rangle \nonumber  \nonumber \\[0.1cm] & \qquad \quad + \EE_{x}\,\big[\,\tfrac{1}{2}\ell^2 
 \big\langle J(x)Z\,, \nabla^2\phi(x)\,
J(x)Z\,\big\rangle\,\big] \cdot \EE_{x,u}\,\big[\,  F(B(x,u+\ell\,K(x)\,Z)-B(x,u)  ) \,\big] \nonumber \\
& \qquad \quad\qquad \quad \qquad \quad + \mathcal{O}(\epsilon)\ . \label{eq:a2}
\end{align}
Following closely the affine case, 
we consider  the sped-up process $\wtilde{S}_{\epsilon,t} = S_{\epsilon, \floor{ t \cdot c(\epsilon) / \wtilde{c}(\epsilon) }}\equiv 
S_{\epsilon, \floor{ t \cdot \epsilon^{\gamma-2} }}$
for the sub-sampled trajectory $S_{\epsilon,k},V_{\epsilon,k}$. 
Recall that the idea here is the $u$-trajectory will have enough time to mix during the sub-sampled times, 
whereas the $x$-trajectory will still make local moves and provide a diffusion limit. 
Thus, we make the following conjecture for the generator $\wtilde{\Gen}_\epsilon$ of the
process $\wtilde{S}_{\epsilon}$:
\begin{align*}
	\wtilde{\Gen}_\epsilon \phi(x,u) 
	&=
	\tfrac{1}{\epsilon^{-\gamma}} \,
	\EE\Big[\,\sum_{j=0}^{\floor{ \epsilon^{-\gamma}}-1}
	\Gen_{\epsilon} \phi (X_{\epsilon,j},U_{\epsilon,j})\mid X_{\epsilon,0}=x, U_{\epsilon,0}=u \, \Big] 
	 = \EE_{x}\big[\,\Gen_{\epsilon}\phi(x,  u)\,\big] + 
o(1)
	\\
  &=  \langle \mathcal{O}(1),\nabla \phi(x)\rangle   + \EE_{x}\,\big[\,\tfrac{1}{2}\ell^2 
 \big\langle J(x)Z\,, \nabla^2\phi(x)\,
J(x)Z\,\big\rangle\,\big] \cdot a_0(x,\ell) + o(1)\  .
\end{align*}
%
%
%
It is easy to check that $\{Dr(x)\}^{\top} Dr(x) = G(x)$. Thus, the above 
expression, and in particular the quantity involving $\nabla^{2}\phi(x)$, 
suggests a diffusion limit with diffusion coefficient $\sigma$ such that:
 $$(\sigma\sigma^{\top})(x)=G(x)^{-1}a_0(x,\ell)\,\ell^2\ ,$$
 with a corresponding expression for the limiting SDE:
%
\begin{align} 
\label{eq:conj}
d\overline{X}_t = \text{drift} \big(\pi_X, \sigma\sigma^{\top})(\overline{X}_t) \, dt  + \sigma(\overline{X}_t)\,dW_t
\end{align}
%
for  
 $D_0 \sim \exp\{A(x)\}$ and $\text{drift} \big(\pi_X, \sigma\sigma^{\top}) = \tfrac{1}{2}\sigma \sigma^{\top}\nabla + \tfrac{1}{2}(\sigma \sigma^{\top})\nabla \log \pi_X$, with $\pi_X(x)=\exp\{A(x)\}$.
The expression we obtained for the diffusion coefficient 
is the same as the one for the Langevin SDE 
on a manifold obtained in \cite{girolami2011riemann,livingstone2014information} with the addition of the 
average acceptance probability term $a_0(x,\ell)\ell^2$.
\section{Conclusions/Future work}
\label{sec:end}
As far as we are aware, ours is the first attempt  toward analytically studying the behaviour and complexity of MCMC algorithms when applied to target densities with a multi-scale structure. We acknowledge here that the practical advice stemming out of our results are probably 
not as strong as in the case of diffusion limits in high-dimensions. 
Still, we believe that our analysis provides inroads for the investigation of MCMC algorithmic performance in a different direction from the one followed so far in the literature. Our work opens up a number of avenues for future work in this area. We highlight a few of these below.
\begin{itemize}

\item In many practical problems the limiting manifold will be non-linear and the directions of small size can vary in different parts of the state space 
and thus one cannot predetermine narrow directions and adjust the step-sizes. The conjectures about diffusions limits on manifolds in Section \ref{sec:man} thus have immediate impact in applications but might require substantial amount of work to proved in full generality.
\item
 In a wider perspective, we believe that the results 
in the paper open new directions also for the study of MCMC algorithms that better exploit the manifold 
structure of the support of the target distribution; this direction also connects with recent advances 
in the development of Riemannian MALA and HMC methods as in, \textit{e.g.}, \cite{girolami2011riemann,livingstone2014information}. 
The set-up in our paper is a bit more involved as the manifold can be of smaller 
dimension  that the general space (in the above works it is of the same dimension).
To be more explicit, following the notation of Section \ref{sec:man}, 
it would be of interest, for instance,  to study RWM with location-specific step-sizes, say of the form:
\begin{align*}
w' &= w + \ell\,\big(  \partial r/\partial x_1,\, \partial r/\partial x_2,\, \ldots,\,  
\partial r/\partial x_{n_x} \mid q_1,\, q_2,\, \ldots,\, q_{n_y} \big)\,
\left( \begin{array}{c} 
h (\epsilon) Z_x \\
\epsilon Z_y
\end{array}
\right)\\
&= w + \ell\, h(\epsilon)\, Dr(x)\,Z_x + \ell\, \epsilon\, Q(v)\, Z_y \ ,  
\end{align*}
 so that the method moves along the tangent space $T_{v}\m$ with a step of size $h(\epsilon)$
 and along the perpedicular normal space with step of size $\epsilon$ (as this is the 
 size of the probability mass around $\m$). Of interest here would be the optimal selection 
 for $h(\epsilon)$ and the specification of the computational cost for the method.
 For example, consider the case when the manifold $\m$ corresponds to a circle of radius 1 in 
 $\RR^{2}$. Then, it appears that controlling the acceptance probability would require 
 $h(\epsilon)=\sqrt{\epsilon}$, as this is the order of the size of the string of the circle
 which is perpendicular to a radius at position of distance $\epsilon$ from the circle surface. 
Such questions could be investigated for general manifold structures.
The above of course corresponds to an idealized algorithm, when the method 
uses explicit information about the tangent and normal space. 
In practice, it will be of interest to investigate also practical recent algorithms using for instance information about the curvature of the log-target distribution or the Hessian to scale the step-sizes in the different directions, and contrast their effect with the idealised scenario.
\item It is of interest to provide connections with locally adaptive methods 
currently looked at in the literature. 
Further exploration of such advanced methods in a similar manifold setting may provide analytical results that should be contrasted with the ones here and illustrate their superiority. 
\item Finally, all of our results assume that the Markov chain is at stationarity. There is a parallel literature on the scaling and the behavior of MCMC algorithms in the transient phase. The limiting process in the transient phase is usually an Ordinary Differential Equation instead of an SDE. It is natural next step to obtain analyze the transient phase of our algorithm.
 \end{itemize}
\section*{Acknowledgements}
AB acknowledges supports from a Leverhulme Trust Prize and AT from a Start-Up Grant from the National University of Singapore. NSP is partially supported by an ONR grant.
We thank Sam Livingstone at UCL for several useful 
conservations on the topic.

\vspace{0.5cm}

\appendix

\section{Proofs}
\label{app.A}

\subsection{Proof of Proposition \ref{ridge.prop:limiting.generator}} \label{ridge.sec.proof.limit.gen}
Recall the definition of the one-step generator $\Gen_{\epsilon}$   in   \eqref{eq.generators}.
Using the  notation $\mathsf{v} = \nabla \phi(x) \in \RR^{n_x}$, $\mathsf{S} = \nabla^2 \phi(x) \in \RR^{n_x \times n_x}$, a 2nd order Taylor expansion yields
\begin{align*}
\Gen_\epsilon \phi(x)
&=
\epsilon^{-2} \, \EE_{\epsilon,x,u} \,\big[\,\BK{\phi(x+\ell \, \epsilon \, Z_x) - \phi(x)} \times a(x,u,\epsilon \, Z_x,  Z_y)\,\big]\\
&=
\ell \, \epsilon^{-1} \, 
\EE_{\epsilon,x,u}\,\big[\,\bra{\mathsf{v}, Z_x} \times a(x,u,\epsilon \, Z_x, Z_y)\,\big] \\
&\quad +
\tfrac{1}{2}\,\ell^2 \, 
\EE_{\epsilon,x,u}\,\big[\, \bra{Z_x, \mathsf{S} \, Z_x} \times a(x,u,\epsilon \,Z_x, \, Z_y)\,\big] + o_{L_1(\pi)}(1)\ , 
\end{align*}
where the remainder term has been identified as $o_{L_1(\pi)}(1)$ for $\epsilon\rightarrow 0$
as its absolute value is upper bounded by $C\,\epsilon \,\EE\|Z_x\|^3$,
for a constant $C>0$,
due to $\varphi$ being smooth and of compact support.
Thus, to prove the stted limiting result, it suffices to prove that the following two identities hold,
\begin{align}
\epsilon^{-1} \, 
\EE_{\epsilon,x,u}\,
\big[\,\bra{\mathsf{v}, Z_x} \times a(x,u,\epsilon  \, Z_x,  Z_y)\,\big] 
&=\ell\,
\EE_{x,u}\,\big[\,F'(D B) \bra{ \mathsf{v},\nabla A(x) + \nabla_x B(x,u + \ell Z_y) }\,\big]  \label{eq:abc} \\  & \qquad \qquad \qquad +  o_{L_2(\pi)}(1)\  ,\nonumber \\[0.3cm]
\EE_{\epsilon,x,u}\,\big[\, \bra{Z_x, \mathsf{S} \, Z_x} \times a(x,u,\epsilon \,  Z_x,  Z_y)\,\big]
&=
\trace(\mathsf{S}) \times \EE_{x,u}\,[\,F(D B)\,] +  o_{L_2(\pi)}(1)\  . 
\label{eq:alex}
\end{align}
Recall the shorthand notation $D B= B(x,u + \ell \, Z_y)-B(x,u)$.
Expression \eqref{ridge.eq:accept} for the acceptance probability function $a(\cdot, \cdot, \cdot, \cdot)$ together with regularity Assumption \ref{ridge.assump.pi} on functions $A$ and $B$ yield
\begin{align}
a(x,u,\epsilon \, Z_x,  Z_y)
&=
F(D B) + \ell \, \epsilon \,F'(D B)\, \bra{\nabla A(x) + \nabla_x B(x,u + \ell \, Z_y), Z_x} 
+ \epsilon \times o_{L_1(\pi)}(1)\ .\label{eq:acca}
\end{align}
The remainder term has been identified as  
$\epsilon \times o_{L_1(\pi)}(1)$ as it is upper bounded in absolute value 
by $C\,\epsilon^2\,(1+\|x\|+\|u\|+\|Z_x\|+\|Z_y\|)\times \|Z_x\|$, for a constant $C>0$,
 due to $F'$ being bounded and Lipschitz, 
and $\nabla A, \nabla B$ being Lipschitz; also, $\pi$ has finite absolute first moments.
Using this expression gives 
 that
the quantity 
$
\epsilon^{-1} \, 
\EE_{\epsilon,x,u} 
\sqBK{ \bra{\mathsf{v}, Z_x} \times a(x,u,\epsilon \,Z_x,  Z_y)}
$
equals
\begin{align*}
\ell\,
\EE_{x,u}\,\big[\,\bra{\mathsf{v}, Z_x} \times F'(D B)\times 
\bra{\nabla A(x) + \nabla_x B(x,u + \ell \, Z_y), Z_x}\,\big] + o_{L_1(\pi)}(1) \ .
\end{align*}
which gives immediately (\ref{eq:abc}) after taking the expectation over $Z_x$.
Also,  (\ref{eq:alex}) follows immediately from (\ref{eq:acca}). 
Finally, the stated upper bound in the proposition follows immediately from the explicit 
upper bounds given above for the remainder terms.

%

%
%
\subsection{Proof of Proposition \ref{pr:ident}}
\label{ridge.sec.proof.lem.averaging}
To keep this exposition as simple as possible, we suppose that $\ell = 1$ 
and $n_x=n_y=1$. The multi-dimensional case is entirely similar.
The proof of \eqref{eq.identity.A.gen} consists in verifying that 
for all $x \in \RR$ the following identity holds,
\begin{align} \label{ridge.eq.generator.averaging}
\int_{u \in \RR} \, \A\phi(x,u) \, e^{B(x,u)} \, du \;=\; \Gen \phi(x)
\end{align}
where $\Gen$ is the generator of the limiting diffusion \eqref{ridge.eq.limit.diff}, 
$\phi \in \mathcal C$ is a test function in the core of $\Gen$ and
$\A\phi(x,u)$ reads
\begin{align}
\A\phi(x,u) 
= \EE\,\big[\, F'(D B)  \big( A'(x) + \partial_x B(x,u + Z) \big)\,\big] \, \phi'(x)
+ \tfrac{1}{2} \, \EE\,\big[\,F(D B)\,\big] \, \phi^{''}(x)
\end{align}
where $D B = B(x,u+Z)-B(x,u)$ and $Z \sim\Normal(0,1)$.
The proof is a routine calculation that is based on the symmetry of the Gaussian distribution and the fact that the accept-reject function  $F$ verifies the reversibility condition \eqref{ridge.eq.reversibility}.
More specifically, the derivative of equation  \eqref{ridge.eq.reversibility} also reads 
\begin{align} \label{ridge.eq.reversibility.derivative}
F(r) = F'(r) + e^r F'(-r)\ .
\end{align}
This identity also holds for the standard MH function 
$F_{\text{MH}}(r) = \min(1,e^r)$ but has to be interpreted in the sense of distributions. 
In the scalar case $n_x=1$ with $\ell=1$, the generator 
of \eqref{ridge.eq.limit.diff} reads
$$\Gen \phi(x) = \tfrac12 \, \big( a_0(x) \, A'(x)+ a'_0(x)  \big) \, \phi'(x) 
+ \tfrac{1}{2}\, a_0(x) \, \phi^{''}(x)$$ 
where $a_0(x) \equiv a_0(x,1)$ is the mean acceptance probability
$ a_0(x) = \int_{u \in \RR} \EE\,[\,F(D B)\,] \, e^{B(x,u)} \, du$.
To prove \eqref{ridge.eq.generator.averaging} it suffices to verify that
\begin{align} \label{ridge.eq.averaging.identity}
\EE\,\big[\, F'(D B) \,\partial_x B(x,u + Z)\,\big]=\tfrac12 \, a'_0(x) 
\quad \textrm{and} \quad
\EE\,\big[\,F'(D B)\, \big]= \tfrac12\, a_0(x)\ .
\end{align}
Let us prove that the first identity holds. Assumption \ref{ridge.assump.pi} 
justify the following derivation under the integral sign,
\begin{align*}
\partial_x a_0(x) =\int \EE\,\big[\,F'(D B) \, \big(\partial_x B(x,u+Z) - \partial_x B(x,u) \big)
+ F(D B) \partial_x B(x,u)\,\big] \, e^{B(x,u)} \, du\ .
\end{align*}
Equation \eqref{ridge.eq.reversibility.derivative} shows that $F(D B) = F'(D B) + F(-D B)e^{D B}$; since $e^{D B} e^{B(x,u)} = e^{B(x,u+Z)}$, we have
\begin{align*}
\partial_x a_0(x) 
&=\int \EE\,\big[\, F'(D B) \partial_x B(x,u+Z) e^{B(x,u)} \,\big] \, du
+\int \EE\,\big[\, F'(-D B) \partial_x B(x,u) e^{B(x,u+Z)} \,\big] \, du\ .
\end{align*}
The symmetry of the Gaussian distribution $Z \sim \Normal(0,1)$ 
then shows that 
\begin{align}
\int \EE\,\big[\,F'(D B) \partial_x B(x,u+Z) e^{B(x,u)}\,\big] \, du
= \int \EE\,\big[\,F'(-D B) \partial_x B(x,u) e^{B(x,u+Z)} \big]\ .
\end{align}
This concludes the proof 
of the first identity of \eqref{ridge.eq.averaging.identity}. The proof of the second 
identity is similar and even simpler, and thus omitted.

\bibliographystyle{plain}
\bibliography{references}     

\end{document}